
\documentclass[11pt]{article}
\usepackage{latexsym,amsmath,url,epsfig}
\usepackage{tikz}
\usepackage{textcomp}
\usepackage{amsfonts,euscript}
\usepackage{amsmath}
\usepackage{amssymb}
\usepackage{amsfonts}
\usepackage{amsthm}
\usepackage{mathrsfs}
\usepackage[all]{xy}
\usepackage{epstopdf}
\usepackage{subfigure}
\usepackage{rotating}
\usepackage[title]{appendix}
\usepackage{appendix}
\usepackage{multirow,rotating}
\usepackage{url}
\usepackage{algorithm}
\usepackage{algorithmicx}
\usepackage{algpseudocode}
\usepackage{subfigure}
\usepackage{setspace}
\usepackage{listings}
\usepackage{color}
\usepackage{tikz}
\usepackage{graphics}
\usepackage[breaklinks=true]{hyperref}
\usepackage[english]{babel}
\usepackage{datetime}
\usepackage{subfigure}
\usepackage{booktabs}
\usepackage{rotfloat}
\usepackage{framed}
\usepackage[authoryear]{natbib}
\usepackage{graphicx}

\setcounter{MaxMatrixCols}{10}

\usetikzlibrary{chains,shapes,decorations,arrows,calc,arrows.meta,fit,positioning}
\tikzset{
    -Latex,auto,node distance =1 cm and 1 cm,semithick,
    state/.style ={ellipse, draw, minimum width = 0.7 cm},
    point/.style = {circle, draw, inner sep=0.04cm,fill,node contents={}},
    bidirected/.style={Latex-Latex,dashed},
    el/.style = {inner sep=2pt, align=left, sloped}
}

\graphicspath{{pic/}}
\textwidth=6.6in
\textheight=8.9in
\headheight=0.0in
\oddsidemargin=0.0in
\headsep=0.0in
\topmargin=0.0in
\newtheorem{theorem}{Theorem}

\newtheorem{lemma}{Lemma}

\theoremstyle{definition}

\newtheorem{example}{Example}

\algdef{SE}[DOWHILE]{Do}{doWhile}{\algorithmicdo}[1]{\algorithmicwhile\ #1}

\allowdisplaybreaks
\begin{document}

\title{Tackling Interference Induced by Data Training Loops in A/B Tests: A Weighted
Training Approach}
\author{Nian Si\thanks{Email: niansi@chicagobooth.edu.} \\ Booth School of Business, University of Chicago }
\date{\today }
\maketitle

\begin{abstract}
The standard data-driven pipeline in contemporary recommendation systems involves a continuous cycle in which companies collect historical data, train subsequently improved machine learning models to predict user behavior, and provide
improved recommendations. The user's response, which depends on the recommendation produced in this cycle, will become future training data. However, these data training-recommendation cycles can introduce interference in A/B tests, where data generated by control and treatment algorithms, potentially with different
distributions, are aggregated together. To address these challenges, we introduce a
novel approach called weighted training. This approach entails training a
model to predict the probability of each data point appearing in either the
treatment or control data and subsequently applying weighted losses during
model training. We demonstrate that this approach achieves the least
variance among all estimators that do not cause shifts in the training
distributions. Through simulation studies, we demonstrate the lower bias and
variance of our approach compared to other methods. 
\end{abstract}

\section{Introduction}

Experimentation (A/B tests) has emerged as the
standard method for evaluating feature and algorithmic updates in online platforms; see comprehensive guidance in \citet{kohavi2020trustworthy}. Instances of the use of A/B tests abound and are wide-ranging, from testing new pricing strategies in e-commerce, evaluating bidding strategies in online advertising, and updating and fine-tuning ranking algorithms in video-sharing platforms, just to name a few.

In such online platforms, recommendation systems are also in place to
enhance user experience by displaying relevant products and engaging videos.
The standard pipeline in recommendation systems operates as follows (as
illustrated in Figure \ref{fig:scheme}):  
\begin{enumerate}
	\item Using historical data, the system trains various machine learning (ML)
	models to predict users' behaviors, such as their interest in recommended
	items and their willingness to purchase certain products.  
	
	\item When a user request is received, the system identifies relevant items
	and ranks them based on the training scores generated by the machine
	learning models.
	\item Items are recommended  to users based on the ranking.
	\item Users interact with the recommended items and take actions, including
	leaving comments below videos and making specific purchases.
	\item The system records these user actions and feeds them back into the
	ML models, facilitating continuous model training.
\end{enumerate}
\begin{figure}[ht]
	\centering
	\begin{tikzpicture}[
		recstate/.style={rectangle, draw, text width=5cm, minimum width=5cm, align=center},]
		\node[recstate] (1) {Train models that predict users' behaviors};
		\node[recstate] (2) [below= of 1 ] {Rank items based on training scores};
		\node[recstate] (3) [below= of 2 ] {Recommend to users based on the ranking};
		\node[recstate] (4) [below =of 3 ] {User interacts with the recommended items};
		\node[recstate] (5) [below =of 4 ] {Record users' behaviors};
		\path (1) edge (2);
		\path (2) edge  (3);
		\path (3) edge (4);
		\path (4) edge (5);
		\draw [dashed] (5.west) to[bend left=30] node{Data} (1.west);
	\end{tikzpicture}
	\caption{A standard pipeline in recommendation system}
	\label{fig:scheme}
\end{figure}
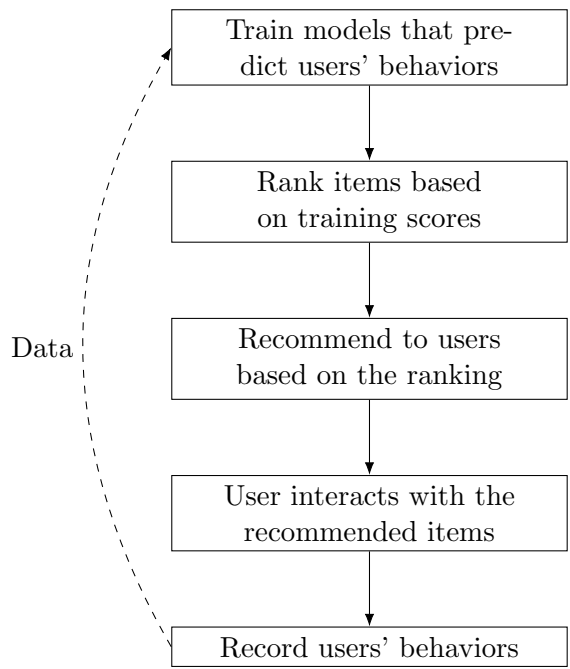

This pipeline ensures that the recommendation system continuously adjusts
and enhances its suggestions, taking into account user interactions and
feedback. However, it also generates a feedback loop, a phenomenon discussed
in both \citet{jadidinejad2020using} and \citet{chaney2018algorithmic}. As
we will demonstrate later, this feedback loop causes interference in A/B tests.

Interference, in the context of experimental design, means the violation of the Standard Unit Treatment Value Assumption (SUTVA) \citep{imbens2015causal}. According to SUTVA, the outcome for a given unit should solely depend on its treatment assignment and its own characteristics, and it should remain unaffected by the treatment assignments of other units. However, when data training loops are present, prior data generated under specific treatment assignments can lead to distinct model predictions. These predictions, in turn, can influence the outcomes observed for subsequent units, thereby violating the assumptions of SUTVA.

More specifically, let's consider a user-side experiment testing two distinct ranking
algorithms. In this scenario, we split the traffic in such a way that
control users are subjected to control algorithms, and treatment users are
subjected to treatment algorithms. Control and treatment algorithms generate data
that may follow different distributions. These data sets are then combined
and fed back into the ML models. This experimental procedure
is represented in Figure \ref{fig:scheme_ab}.

\begin{figure}[ht]
	\centering
	\begin{tikzpicture}
		\node[state] (1) {Training Models};
		\node[state] (2) [below left= of 1 ] {Control Algo};
		\node[state] (3) [below right= of 1] {Treatment Algo};
		\node[state] (4) [below =of 2 ] {Control Data};
		\node[state] (5) [below =of 3 ] {Treatment Data};
		\path (1) edge (2);
		\path (1) edge  (3);
		\path (2) edge (4);
		\path (3) edge (5);
		\draw [dashed] (4) edge[bend right=30] (1);
		\draw [dashed] (5) edge[bend left=30] (1);
	\end{tikzpicture}
	\caption{An A/B testing procedure}
	\label{fig:scheme_ab}
\end{figure}
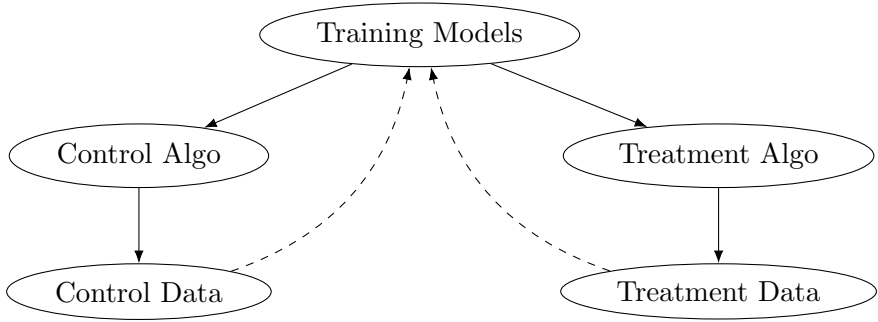
However, it's essential to recognize that this pooled distribution is
distinct from both the control data and the treatment data distributions. It
is widely acknowledged that variations in training distributions can lead to
significantly different predictions. 
To further illustrate this issue, let's consider the following example.

\begin{example}[Experimenting parameters of fusion formulas]
	Imagine a video-sharing platform with two distinct ML models
	that predict finishing rates (FR) and stay durations (SD), respectively. The
	platform's ranking algorithms rank videos using a linear fusion formula: 
		$\alpha_1 \text{FR} + \alpha_2\text{SD}.$
	In an A/B test, we aim to compare different parameter values $\{\alpha_1,
	\alpha_2\}$. Let us consider a scenario where the platform hosts two types of videos: short
	videos, which typically have high finishing rates and low stay durations,
	and long videos, which exhibit the opposite characteristics. If the
	treatment algorithm assigns a higher $\alpha_2$ to stay durations than the
	control algorithm, it will recommend more long videos in the treatment
	group. As a result, in the A/B tests, there will be a higher proportion of
	long videos in the pooled distribution. This can lead to different estimates
	of finishing rates and stay durations by the ML models,
	subsequently altering the recommendation outcomes produced by both the
	control and treatment algorithms.
\end{example}

This interference caused by data training loops closely relates to the
concept of ``symbiosis bias" recently introduced in \cite{holtz2023study}. In
their paper, they discuss cluster randomized designs and data-diverted
designs. Through simulations, they demonstrate that these designs can
effectively reduce biases compared to the naive approach.

In this paper, we introduce a weighted training approach. The concept
revolves around recognizing that a control data point may also appear in the
treatment data with a different probability. To harness this insight, we
create a new model that predicts the probability of each data point
appearing in either the treatment or control data. Subsequently, we train
the ML models using losses that are weighted based on these
predicted probabilities. By doing so, we demonstrate that if the weights are
accurately learned, there will be no shifts in the training distributions,
while making the most efficient use of available data.

The rest of the paper is organized as follows: Section \ref{sec:literature} discusses related literature on interference in A/B tests. Section \ref{sec:model} introduces a potential outcome framework modeling interference caused by data training loops. Section \ref{sec:algo} presents our weighted training approach along with theoretical justification. Section \ref{sec:numerical} showcases extensive simulation studies to demonstrate the performance of our proposed approach. Finally, we conclude with future works in Section \ref{sec:conclusion}.
\section{Related Literature}
\label{sec:literature}
\subsection{Interference in Experiments}
The existence of
interference is well-known in the literature. Empirical studies %
\citep{blake2014marketplace,holtz2020reducing,fradkin2015search} validate
that the bias caused by the interference could be as large as the treatment effect itself. 
In the following, we review the literature on various types of interference in
A/B tests.

\textbf{Interference in two-sided marketplaces.} In two-sided marketplaces, A/B tests are subject to
interference due to competition and spillover effects. %
\citet{johari2022experimental} and \citet{li2022interference} analyze biases
in both user-side and supply-side experiments using stylized models.
Additionally, \citet{bright2022reducing} consider a matching mechanism based
on linear programming and propose debiased estimators via shadow prices. To
mitigate bias, \citet{johari2022experimental} and \citet{bajari2021multiple}
introduce two-sided randomizations, which are also known as multiple
randomization designs. To measure the effectiveness of ``cold start" algorithms, \citet{ye2023cold} propose a similar yet different two-sided split design. Bipartite experiments are also introduced in %
\citet{eckles2017design,pouget2019variance,harshaw2023design}, where the
treatments are assigned in one group of units and the metrics are measured
in another group of units. Cluster experiments can also be applied in
marketplaces, as shown in \citet{holtz2020reducing,holtz2020limiting}.
Building on an equilibrium model, \citet{wager2021experimenting} propose a
local experimentation approach capable of accurately estimating small
changes in system parameters. Additionally, this idea has been extended by %
\citet{munro2021treatment}, who combined it with Bernoulli experiments to
estimate treatment effects of a binary intervention. For supply-side
(seller-side) experiments, \citet{ha2020counterfactual} and %
\citet{nandy2021b} put forth a counterfactual interleaving framework widely
implemented in the industry and \citet{Wang-ba-Producer2023} enhance the design with a novel tie-breaking rule to guarantee consistency and monotonicity. In the context of advertising experiments, %
\citet{liu2021trustworthy} propose a budget-split design and \citet{si2022optimal} use a weighted local linear regression estimation in situations where the budget  is not perfectly balanced between the treatment and control groups.

\textbf{Interference induced by feedback loops}. Feedback loops commonly exist in
complex systems. For instance, in the context of our earlier discussion in
the Introduction, data obtained from recommendations is fed back into the
underlying machine learning models. In online advertising platforms, the ads
shown previously can impact the subsequent ads' recommendations and bidding
prices, primarily due to budget constraints. However, there is relatively
limited literature that delves into experimental design dealing with
interference caused by feedback loops. 
\citet{goli2023bias} attempt to address such
interference by offering a bias-correction approach that utilizes data from past A/B
tests. In the context of searching ranking system, In the context of search
ranking systems, \citet{musgrave2023measuring} suggest the use of
query-randomized experiments to mitigate feature spillover effects. Additionally, for testing bandit learning algorithms, \citet{guo2023evaluating} propose a two-stage experimental design to estimate the lower bound and upper bound of the treatment effects.  
Furthermore, as mentioned earlier, \citet{holtz2023study} explore
similar issues to ours, which they refer to as ``Symbiosis Bias." 

\textbf{Markovian interference.} When a treatment can influence underlying
states, subsequently affecting outcomes in the following periods, we refer
to these experiments as being biased by Markovian interference. A classic
example is experimentation with different matching or pricing algorithms in
ride-sharing platforms. \citet{farias2022markovian} proposes a
difference-in-Q estimator for simple Bernoulli experiments, and its
performance is further validated through a simulation study with Douyin %
\citep{farias2023correcting}. Moreover, leveraging Markov decision
processes, optimal switchback designs have been analyzed in depth by %
\citet{glynn2020adaptive} and \citet{hu2022switchback}. In the specific
context of queuing, \citet{li2023experimenting} have conducted a study on
switchback experiments and local perturbation experiments. They have
discovered that achieving higher efficiency is possible by carefully
selecting estimators based on the structural information of the model.

\textbf{Temporal interference.} Temporal interference arises when there are
carry-over effects. Extensive investigations have been conducted on
switchback experiments %
\citep{bojinov2023design,hu2022switchback,xiong2023data,xiong2023bias}.
Besides switchback experiments, other designs %
\citep{basse2023minimax,xiong2019optimal} have also been proposed and proven
to be optimal in various contexts. In cases involving both spatial and
temporal interference, the new designs proposed in \citet{ni2023design}
combine both switchback experiments and clustering experiments.

\textbf{Network interference}. Network interference is frequently observed
in social networks, where a treatment unit's actions may have spillover
effects on their friends or neighbors. A substantial body of research has
looked into experimental design and causal inference under network
interference, with notable contributions from scholars such as %
\citet{hudgens2008toward}, \citet{gui2015network}, and \citet{li2022random},
among others. Specifically, the designs of graph cluster experiments, which
involves partitioning the graph into nearly disjointed clusters, has been
extensively investigated. This research area has seen contributions from
researchers 
\citep{aronow2017estimating,
	candogan2023correlated,ugander2013graph,ugander2023randomized} For a
comprehensive review of various approaches to address network interference,
we refer readers to Section 3 of \citet{yu2022estimating}.

In addition to the papers mentioned above, interference has also been
studied in other specialized settings. For instance, %
\citet{chawla2016b,basse2016randomization,liao2023statistical},  and \citet{liao2024interference} focus on
experimental design in auctions. \citet{han2023detecting} employ roll-outs,
a technique commonly implemented in experiments, to detect interference.
Additionally, \citet{boyarsky2023modeling} demonstrate that roll-outs can
also help estimation under stronger assumptions.

Finally, we briefly mention adaptive experimental design,  an emerging field aiming for more efficient experiments; see, for example, \citet{qin2022adaptivity, simchi2022multi,kuang2023weak,cook2023semiparametric}. 
 
\subsection{Feedback Loops in Recommendation Systems}
As modern platforms increasingly employ complex  systems, issues arising from feedback loops are becoming more pronounced. Researchers such as \citet{chaney2018algorithmic}, \citet{mansoury2020feedback}, and \citet{krauth2022breaking} have investigated problems related to the amplification of homogeneity and popularity biases due to feedback loops. Additionally, \citet{yang2023rectifying} and \citet{khenissi2022modeling} have noted that these feedback loops can lead to fairness concerns. The concept of user feedback loops and methods for debiasing them are discussed in \citet{pan2021correcting}, while \citet{jadidinejad2020using} consider how  feedback loops affect underlying models.
 In our work, we specifically focus on data training feedback loops and propose valid methods to address their impact on A/B tests.
\section{A Framework of A/B Tests Interfered by Data training Loops}

In this section, we construct a potential outcomes model %
\citep{imbens2015causal} for A/B tests that incorporate the training
procedures. Through our model, we will demonstrate the presence of
interference induced by data training loops in A/B tests.

We are focusing on user-side experiments, where users are assigned randomly
to the treatment group with a probability of $p$ and to the control group
with a probability of $1-p$.

Suppose there are $d$ features associated with each user-item pair, and the
system needs to predict $m$ different types of user behaviors (e.g.,
finishing rates, stay durations). We represent the feature space as $%
\mathcal{X}$, which is a subset of $\mathbb{R}^d$, and the outcome space as $%
\mathcal{Y}$, which is a subset of $\mathbb{R}^m$. In modern large-scale
recommendation systems, $d$ can be as extensive as billions, and $m$ can
encompass hundreds of different behaviors. We define a model class $\mathcal{%
	M}=\{M_{\theta },\theta \in \Theta \}$, which includes various models $%
M_{\theta }:\mathcal{X} \rightarrow \mathcal{Y}$. These models are
responsible for predicting user behaviors based on user-item features. In
this representation, we consolidate the prediction of $m$ distinct user
behaviors into a single model, which yields a $m$-dimensional output for the
sake of simplicity and convenience. In subsequent discussions, we will omit
the subscript $\theta $ for ease of notation.

At time $t$, the training model $M_{t}$ is trained from the previous model $%
M_{t-1}$ with additional data from time $t-1$, denoted as $\mathcal{D}_{t-1}$%
. This training process can be written as 
\begin{equation*}
	M_{t}=\digamma \left( M_{t-1},\mathcal{D}_{t-1}\right) ,
\end{equation*}%
where $\digamma $ denotes a training algorithm, e.g. stochastic gradient
descent (SGD) or Adam \citep{kingma2014adam}.

Further, at time $t$, we suppose there are $n_{t}$ new users have arrived.
For the $i$-user, $i=1,2,\ldots ,n_{t}$, the system recommends an item with
a feature vector $X_{i,t}=X_{i,t}\left( M_{t},Z_{i,t}\right) \in \mathbb{R}%
^{d},$ where $Z_{i,t}\in \left\{ 0,1\right\} $ denotes the treatment
assignment. Subsequently, the potential outcome for this user 
is given as $Y_{i,t}=Y_{i,t}\left( X_{i,t}\right) \in \mathcal{Y}$, which
represents the user's behaviors. Note that $Y_{i,t}$ is independent to $%
Z_{i,t}$ and $M_{t}$, given the feature vector $X_{i,t}.$ 
This assumption is grounded in the typical behavior of recommendation systems, where the primary influence on users' behaviors stems from the modification of recommended items.  Thus, $Y_{i,t}$ is not directly dependent on the treatment assignment $Z_{i,t}$ or the model state $M_{t}$ once the features $X_{i,t}$ are accounted for. We remark that our approach can be readily extended to cases where the treatment variable $Z$ directly affects the outcome $Y$,  as we shall see in Lemma \ref{lma:equal}. Due to the data
training loops, the data collected at time $t$ is incorporated into the
training dataset as follows: 
\begin{equation*}
	\mathcal{D}_{t}=\left\{ \left( X_{1,t},Y_{1,t}\right) ,\left(
	X_{2,t},Y_{2,t}\right) ,\ldots ,\left( X_{n_{t},t},Y_{n_t,t}\right) \right\} .
\end{equation*}
We plot the causal graph \citep{pearl2000models} in Figure \ref%
{figure:causal} to illustrate the dependence in the data training loops. 
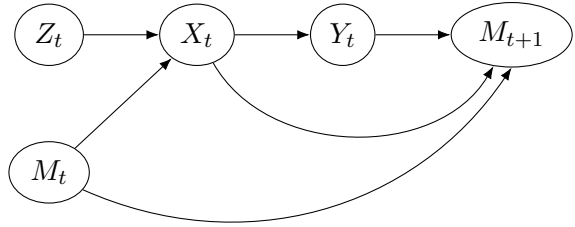
\begin{figure}[ht]
	\centering
	\begin{tikzpicture}
		\node[state] (1) {$Z_t$};
		\node[state] (2) [right =of 1] {$X_t$};
		\node[state] (3) [right =of 2] {$Y_t$};
		\node[state] (4) [right =of 3] {$M_{t+1}$};
		\node[state] (5) [below =of 1] {$M_t$};
		\path (1) edge   (2);
		\path (2) edge   (3);
		\path (3) edge   (4);
		\path (5) edge   (2);
		\path 	(2)   edge[bend right=60]    (4);
		\path (5) edge[bend right=40]   (4.south);
	\end{tikzpicture}
	\caption{Dependence of different objects in the data training loops, where
		we omit the subscript $i$ for simplicity }
	\label{figure:causal}
\end{figure}

It's important to note that $\mathcal{D}_{t}$ consists of
recommendation data, which may differ from the control and treatment data.
Consequently, when applying the training algorithm $\digamma$, the model at
the next time step, $M_{t+1}$, will differ from the model trained solely on
control or treatment data. This, in turn, impacts the recommendations $%
X_{\cdot ,t+1}$ at the subsequent period. Therefore, it becomes evident that
these A/B tests are susceptible to interference caused by data training
loops.

Our objective is to estimate the global treatment effect (GTE), which is
defined as the difference between the metrics observed under the global
treatment and the global control regimes. The global treatment regime is
defined as having all $Z_{i,t}$ equal to one, while the global control
regime is defined as having all $Z_{i,t}$ equal to zero. In mathematical
terms, we represent this as follows: within the global treatment regime, the procedure is outlined as:
\begin{eqnarray*}
	X_{i,t}^{\mathrm{GT}} &=&X_{i,t}\left( M_{t}^{\mathrm{GT}},1\right)
	,Y_{i,t}^{\mathrm{GT}}=Y_{i,t}\left( X_{i,t}^{\mathrm{GT}}\right) , \\
	\mathcal{D}_{t}^{\mathrm{GT}} &=&\left\{ \left( X_{1,t}^{\mathrm{GT}%
	},Y_{1,t}^{\mathrm{GT}}\right) ,\left( X_{2,t}^{\mathrm{GT}},Y_{2,t}^{%
		\mathrm{GT}}\right) ,\ldots ,\left( X_{n_{t},t}^{\mathrm{GT}},Y_{n_t,t}^{%
		\mathrm{GT}}\right) \right\} \\
	M_{t}^{\mathrm{GT}} &=&\digamma \left( M_{t-1}^{\mathrm{GT}},\mathcal{D}%
	_{t-1}^{\mathrm{GT}}\right) ,\text{ \ for }t=1,\ldots ,T; 
\end{eqnarray*}
Similarly,  within the global control regime, we have:
\begin{eqnarray*}
	X_{i,t}^{\mathrm{GC}} &=&X_{i,t}\left( M_{t}^{\mathrm{GC}},0\right)
	,Y_{i,t}^{\mathrm{GC}}=Y_{i,t}\left( X_{i,t}^{\mathrm{GC}}\right) , \\
	\mathcal{D}_{t}^{\mathrm{GC}} &=&\left\{ \left( X_{1,t}^{\mathrm{GC}%
	},Y_{1,t}^{\mathrm{GC}}\right) ,\left( X_{2,t}^{\mathrm{GC}},Y_{2,t}^{%
		\mathrm{GC}}\right) ,\ldots ,\left( X_{n_{t},t}^{\mathrm{GC}},Y_{n_t,t}^{%
		\mathrm{GC}}\right) \right\} , \\
	M_{t}^{\mathrm{GC}} &=&\digamma \left( M_{t-1}^{\mathrm{GC}},\mathcal{D}%
	_{t-1}^{\mathrm{GC}}\right) ,\text{ \ for }t=1,\ldots ,T, 
\end{eqnarray*}%
Here, we assume $\mathcal{D}_{0}^{\mathrm{GC}} =\mathcal{D}_{0}^{\mathrm{GT%
}}\text{ and }M_{0}^{\mathrm{GC}}=M_{0}^{\mathrm{GT}}$. 
The $m$-dimensional GTE is defined as 
\begin{equation*}
	\mathrm{GTE}=\mathbb{E}\left[ \frac{1}{\sum_{t=1}^{T}n_{t}}%
	\sum_{t=1}^{T}\sum_{i=1}^{n_{t}}\left( Y_{i,t}^{\mathrm{GT}}-Y_{i,t}^{%
		\mathrm{GC}}\right) \right] .
\end{equation*}

In the naive A/B tests, the estimator is 
\begin{align}
	\frac{1}{\sharp \left\{ Z_{i,t}=1\right\} }\sum_{Z_{i,t}=1}^{{}}Y_{i,t}%
	\left( X_{i,t}\left( M_{t},1\right) \right) - 
	 \frac{1}{\sharp \left\{
		Z_{i,t}=0\right\} }\sum_{Z_{i,t}=0}^{{}}Y_{i,t}\left( X_{i,t}\left(
	M_{t},0\right) \right) ,  \label{naive_estimator}
\end{align}%
where $\sharp \left\{ Z_{i,t}=1\right\} $ and $\sharp \left\{
Z_{i,t}=0\right\} $ are the number of users in the treatment and control,
respectively. Because of the interference induced by data training loops, it
is possible for the estimator to exhibit bias when estimating the Global
Treatment Effect (GTE). \label{sec:model}

\section{A Weighted Training Approach}

Based on the potential outcome model established in Section \ref{sec:model},
it becomes apparent that interference arises due to shifts in the training
distributions. In this section, we will introduce an approach that assigns
weights to the original data distributions obtained from the A/B tests. We
will demonstrate that these weighted distributions have the capability to
recover the data distributions for the control group and the treatment group.

In abstract terms, constructed in a probability space $\left( \Omega ,%
\mathcal{F},P\right) ,$ let $D=(X,Y)$ be the random variable representing
some data of $(X,Y)\in \mathcal{X}\times \mathcal{Y}$. Specifically, $%
D_{C}=(X_{C},Y_{C}),D_{T}=(X_{T},Y_{T})$ be the random variable representing
control data and treatment data, respectively. We use $\mathcal{D}_{C},%
\mathcal{D}_{T}$ to denote the distributions of the control data and
treatment data, respectively. Therefore, by using $\mathcal{L}(\cdot )$ to
denote the law (distribution) of a random variable, we have 
\begin{equation*}
	\mathcal{D=\mathcal{L}}\left( D\right) ,\mathcal{D}_{C}=\mathcal{L}\left(
	D_{C}\right) \text{ and }\mathcal{D}_{T}=\mathcal{L}\left( D_{T}\right) .
\end{equation*}%
Let the treatment assignment $Z$ also be constructed in the same probability
space. Importantly, $Z$ is independent to $\left\{ D_{C},D_{T}\right\} ,$
i.e., 
\begin{equation*}
	Z\bot \left\{ D_{C},D_{T}\right\} ,
\end{equation*}%
which is the unconfoundedness assumption in casual inference %
\citep{rosenbaum1983central}. The random variable $D_{E}=\left\{
X_{E},Y_{E}\right\} $ represents the data obtained from the experiment and
can be expressed as follows: 
\begin{equation*}
	D_{E}=D_{T}Z+D_{C}\left( 1-Z\right) ,
\end{equation*}%
where $P(Z=1)=p$ represents the probability of treatment assignment.
Consequently, the distribution of the experimental data can be described as: 
\begin{equation*}
	\mathcal{D}_{E}=p\mathcal{D}_{T}+(1-p)\mathcal{D}_{C},
\end{equation*}%
due to the independence of $Z$ and $\left\{ D_{C},D_{T}\right\} .$

Our objective is to shift the
distribution of experimental data $\mathcal{D}_{E}$ towards that of the
control data $\mathcal{D}_{C}$ and the treatment data $\mathcal{D}_{T}$ to
mitigate bias. To achieve this, we introduce a weighting function $W(\cdot):%
%
%
%
%
%
%
\Omega \rightarrow \mathbb{R}_{+}$, with the property that $\mathbb{E}[W]=1$%
. We denote the resulting weighted distribution as $W\mathcal{D}$, i.e.,
\begin{equation*}	W\mathcal{D(}A\mathcal{)=}\mathbb{E}\left[ WI\left\{ D\in A\right\} \right] 
	\text{ for any measurable }A\text{ in }\mathcal{X}\times \mathcal{Y},
\end{equation*}
where we primarily focus on $\mathcal{D}= \mathcal{D}_E$ and $D=D_E$ in this paper. 
It is easy to check $W\mathcal{D}$ in $\mathcal{X}\times \mathcal{Y}$ is
also a probability distribution as $W(\cdot )$ is non-negative and $\mathbb{E%
}[W]=1.$

Our first result, presented below, demonstrates that by selecting the weight
function as $\mathbb{E}\left[ Z|X_{E}\right] /p$ or $\left( 1-\mathbb{E}%
\left[ Z|X_{E}\right] \right) /(1-p)$, we can effectively recover the
treatment and control data distributions, respectively.

\begin{lemma}
	The weighted functions 
	\begin{align*}
		W_{T}(X_{E},Y_{E},Z)=\frac{\mathbb{E}\left[ Z|X_{E}\right] }{p}\text{ and } W_{C}(X_{E},Y_{E},Z)=\frac{1-\mathbb{E}\left[ Z|X_{E}\right] }{1-p}
		\label{weight_function}
	\end{align*}%
	satisfy 
	\begin{equation*}
		W_{T}\mathcal{D}_{E}\overset{d}{\mathcal{=}}\mathcal{D}_{T}\text{ and }W_{C}%
		\mathcal{D}_{E}\overset{d}{\mathcal{=}}\mathcal{D}_{C},
	\end{equation*}%
	where $\overset{d}{\mathcal{=}}$ means equal in distribution.
 \label	{lma:equal}
\end{lemma}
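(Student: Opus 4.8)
The plan is to establish, for every measurable $A\subseteq\mathcal{X}\times\mathcal{Y}$, the set-level identities $W_T\mathcal{D}_E(A)=P(D_T\in A)$ and $W_C\mathcal{D}_E(A)=P(D_C\in A)$; since $W\mathcal{D}$ is defined through indicators, these are exactly the distributional equalities $W_T\mathcal{D}_E\overset{d}{=}\mathcal{D}_T$ and $W_C\mathcal{D}_E\overset{d}{=}\mathcal{D}_C$. I would first note in passing that $W_T,W_C\ge 0$ and, since $\mathbb{E}[\mathbb{E}[Z\mid X_E]]=\mathbb{E}[Z]=p$, that $\mathbb{E}[W_T]=1$ and likewise $\mathbb{E}[W_C]=1$, so both weighted laws are genuine probability measures (as already observed in the text) — though this is in any case subsumed by taking $A=\mathcal{X}\times\mathcal{Y}$ at the end.

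For the core step, unwind the definition of the weighted law:
\begin{equation*}
W_T\mathcal{D}_E(A)=\mathbb{E}\!\left[W_T\,I\{D_E\in A\}\right]=\frac{1}{p}\,\mathbb{E}\!\left[\mathbb{E}[Z\mid X_E]\,I\{D_E\in A\}\right].
\end{equation*}
The first move is to replace $\mathbb{E}[Z\mid X_E]$ by $\mathbb{E}[Z\mid D_E]$. This is where the modeling structure of Section~\ref{sec:model} is used: since the outcome depends on a unit only through its realized feature vector, $Z$ and $Y_E$ are conditionally independent given $X_E$, so $\mathbb{E}[Z\mid X_E]=\mathbb{E}[Z\mid X_E,Y_E]=\mathbb{E}[Z\mid D_E]$. (In the extension mentioned before the statement, where $Z$ is allowed to affect $Y$ directly, one simply keeps $\mathbb{E}[Z\mid X_E,Y_E]$ throughout — which is why $W_T$ is written as a function of $(X_E,Y_E,Z)$ — and nothing else changes.) Because $I\{D_E\in A\}$ is $\sigma(D_E)$-measurable, the tower property then yields $\mathbb{E}[\mathbb{E}[Z\mid D_E]\,I\{D_E\in A\}]=\mathbb{E}[Z\,I\{D_E\in A\}]$.

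To conclude, use the two elementary observations that (i) on $\{Z=1\}$ one has $D_E=D_T$, so $Z\,I\{D_E\in A\}=Z\,I\{D_T\in A\}$ as random variables (both sides vanish when $Z=0$), and (ii) $Z\perp\{D_C,D_T\}$. Hence
\begin{equation*}
W_T\mathcal{D}_E(A)=\frac{1}{p}\,\mathbb{E}[Z\,I\{D_T\in A\}]=\frac{1}{p}\,\mathbb{E}[Z]\,P(D_T\in A)=P(D_T\in A).
\end{equation*}
The statement for $W_C$ follows by rerunning the identical argument with $Z\mapsto 1-Z$, $p\mapsto 1-p$, and $D_T\mapsto D_C$, using $D_E=D_C$ on $\{Z=0\}$.

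The computation is short, and the one step that requires genuine attention — indeed the only place any model assumption beyond unconfoundedness is invoked — is the identification $\mathbb{E}[Z\mid X_E]=\mathbb{E}[Z\mid D_E]$, i.e.\ that conditioning the propensity on the features alone coincides with conditioning on the full data point. The remaining ingredients (well-definedness of the weighted measure, the pointwise identity $Z\,I\{D_E\in A\}=Z\,I\{D_T\in A\}$, and factoring the expectation via independence) are routine.
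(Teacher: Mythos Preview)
Your proof is correct and follows essentially the same route as the paper's: identify $\mathbb{E}[Z\mid X_E]=\mathbb{E}[Z\mid D_E]$ via the conditional independence $Y_E\perp Z\mid X_E$, apply the tower property to replace $\mathbb{E}[Z\mid D_E]$ by $Z$, use $D_E=D_T$ on $\{Z=1\}$, and factor via $Z\perp D_T$. The only differences are cosmetic---you add the (harmless) remark on well-definedness and spell out the $W_C$ case by symmetry, whereas the paper omits the latter.
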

\textbf{Remark:} In cases where the treatment variable $Z$ is able to  directly affect the outcome $Y_E$, the adjustment can be made by substituting the conditional expectation $\mathbb{E}\left[ Z|X_{E}\right]$ with $\mathbb{E}\left[ Z|X_{E},Y_{E}\right]$.

The proof of Lemma \ref{lma:equal} is presented in Appendix \ref%
{appendix:proof}. Lemma \ref{lma:equal} shows that we are able to
reconstruct the treatment and control data distributions from the A/B
testing data distribution, provided that we can estimate $\mathbb{E}\left[
Z|X_{E}\right] $ with sufficient accuracy. 

Since the quantity $\mathbb{E}\left[Z|X_{E}\right]$ is typically unknown beforehand, it becomes necessary to estimate it from the available data. To achieve this, we construct an additional machine learning model denoted as ${G}_{\theta_W}$. This model is trained using the data $\{X_E,Z\}$ obtained from the experiments, treating it as a classification problem. Subsequently, the predictions generated by ${G}_{\theta_W}$ are utilized as weights (after proper normalization) to form weighted losses for the original machine learning models. This method is detailed in Algorithm \ref{algo:weighted}.

\begin{algorithm}[!ht]
	\caption{A weighted training approach for A/B tests}
	\label{algo:weighted}
	\begin{algorithmic}[1]
		\Require{The probability of treatment assignment: $p$; a model class for the weight prediction: $\mathcal{G}=\{G_{\theta_W}: \mathbb{R}^d \rightarrow [0,1], {\theta_W}\in \Theta_W\}$; the machine learning model class: $\mathcal{M}=\{M_\theta: \mathcal{X}  \rightarrow \mathcal{Y} , \theta\in \Theta\}$; loss functions: $\ell(M(X),Y)$ (could be $m$-dimensional).}
	    \State Initialize two models, the treatment model ${M}_{\theta_T}$ and the control model ${M}_{\theta_C}$, both of which are set to the current production model.
		\For{$t \gets 1$  to the end of the experiment  }
		\For{$i \gets 1$  to $n_t$  } 
		\State User $i$ arrives. The platform randomly assigns user $i$ to the treatment group with probability $p$. 
		\State When a user is assigned to the treatment group, the platform recommends an item based on the treatment algorithm and model, and vice versa.
		\State Collect data $(X_{i,t},Y_{i,t},Z_{i,t})$.
		\EndFor 
		\State Compute weights: 
		$$W_{T,i,t}=\frac{G_{\theta_W}(X_{i,t})}{p} \text{ and } W_{C,i,t}=\frac{1-G_{\theta_W}(X_{i,t})}{1-p} \text{ , for } i=1,2,\ldots,n_t. $$ 
		\State Update the treatment model  ${M}_{\theta_T}$ by minimizing the weighted loss 
	$$\frac{1}{n_t}\sum_{i=1}^{n_t}W_{T,i,t}\ell({M}_{\theta_T}(X_{i,t}),Y_{i,t}).$$
	    \State Update the control model  ${M}_{\theta_C}$ by minimizing the weighted loss 
	    $$\frac{1}{n_t}\sum_{i=1}^{n_t}W_{C,i,t}\ell({M}_{\theta_C}(X_{i,t}),Y_{i,t}).$$
		\State Update the model $G_{\theta_W}$ using data $\{(X_{i,t},Z_{i,t}),i=1,\ldots,n_t\}$.
		\EndFor
		
		\Return the estimator (\ref{naive_estimator}).
	\end{algorithmic}
\end{algorithm}

We remark that while $\mathbb{E}\left[Z|X_{E}\right]$ might be complex, there is no need for precise estimation in practical applications. In fact, simple models like two-layer neural networks perform well, as demonstrated in our numerical results.

From the proof of Lemma \ref{lma:equal}, one may note that the simple weight
function $\tilde{W}=Z$ also satisfy 
$
	\tilde{W}\mathcal{D}_{E}\overset{d}{\mathcal{=}}\mathcal{D}_{T}.
$
Indeed, using $Z_{i,t}$ instead of training a model $G_{\theta_W}$ in Algorithm \ref{algo:weighted}
results in a data splitting approach, also known as a data-diverted
experiment, as discussed in \citet{holtz2023study}. In such experiments,
each model is updated exclusively using data generated by users exposed to
the corresponding algorithm. However, this approach lacks data efficiency,
as it utilizes only a fraction of the data, namely $p$ for the treatment
model and $1-p$ for the control model. 
For instance, in cases where the control data distribution is identical to
the treatment data distribution, our approach can leverage all available
data for training both control and treatment models. This is because $\frac{%
	\mathbb{E}\left[ Z|X_{E}\right] }{p}=\frac{1-\mathbb{E}\left[ Z|X_{E}\right] 
}{1-p}=1$ in this case.

Intuitively, in the finite sample regime with $n$ samples, the variance of
the estimator should be proportional to $\frac{1}{n^{2}}\sum_{i=1}^{n}\left(
W_{i}/p\right) ^{2}.$ In the following, we will demonstrate that our
approach can achieve this lower variance, defined in this manner, among all
possible weights without causing shifts in the training distributions.

\begin{theorem}
	\label{thm:min_var} $W_{T}(X_{E},Y_{E},Z)=\mathbb{%
		E}\left[ Z|X_{E}\right] /p$ attains the minimum of the following
	optimization problem 
	\begin{equation}
		\min_{W(\cdot ):\Omega \rightarrow \mathbb{R}_{+}}\left\{ \mathbb{E}\left[
		W^{2}\right] :W\mathcal{D}_{E}\overset{d}{\mathcal{=}}\mathcal{D}%
		_{T}\right\} .  \label{problem:treatment_var}
	\end{equation}%
	Similarly, $W_{C}(X_{E},Y_{E},Z)=\left( 1-\mathbb{%
		E}\left[ Z|X_{E}\right] \right) /\left( 1-p\right) $ attains the minimum of
	the following optimization problem%
	\begin{equation}
		\min_{W(\cdot ):\Omega \rightarrow \mathbb{R}_{+}}\left\{ \mathbb{E}\left[
		W^{2}\right] :W\mathcal{D}_{E}\overset{d}{\mathcal{=}}\mathcal{D}%
		_{C}\right\} .  \label{problem:control_var}
	\end{equation}
\end{theorem}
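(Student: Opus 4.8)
The plan is to recognize the optimization problem \eqref{problem:treatment_var} as one of minimizing the second moment of a nonnegative weight subject to the constraint that reweighting $\mathcal{D}_E$ by $W$ recovers $\mathcal{D}_T$, and to solve it by conditioning and Jensen's inequality. First I would unpack the distributional constraint $W\mathcal{D}_E \overset{d}{=} \mathcal{D}_T$: since $\mathcal{D}_E = p\mathcal{D}_T + (1-p)\mathcal{D}_C$, writing densities (or Radon--Nikodym derivatives with respect to a dominating measure) the constraint says that $\mathbb{E}[W \mid D_E = (x,y)] \cdot f_E(x,y) = f_T(x,y)$ for $\mathcal{D}_E$-almost every $(x,y)$, i.e.\ the conditional expectation of $W$ given the data equals the likelihood ratio $d\mathcal{D}_T/d\mathcal{D}_E$. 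This is exactly what Lemma~\ref{lma:equal} already exploits, and indeed one checks $\mathbb{E}[W_T \mid D_E] = \mathbb{E}[Z \mid X_E]/p = d\mathcal{D}_T/d\mathcal{D}_E$ using unconfoundedness.

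Next I would apply the tower property and the conditional Jensen inequality. For any feasible $W$,
\begin{equation*}
\mathbb{E}[W^2] = \mathbb{E}\left[ \mathbb{E}[W^2 \mid D_E] \right] \ge \mathbb{E}\left[ \left(\mathbb{E}[W \mid D_E]\right)^2 \right] = \mathbb{E}\left[ \left(\tfrac{d\mathcal{D}_T}{d\mathcal{D}_E}(D_E)\right)^2 \right],
\end{equation*}
and the right-hand side is a fixed quantity determined only by the two distributions, not by the choice of $W$. So every feasible $W$ has second moment at least this bound. It then remains to verify that $W_T = \mathbb{E}[Z \mid X_E]/p$ is itself feasible (which is Lemma~\ref{lma:equal}) and attains the bound: since $W_T$ is $\sigma(X_E)$-measurable and in fact equals $\mathbb{E}[W_T \mid D_E]$, the conditional Jensen step is an equality for $W_T$, giving $\mathbb{E}[W_T^2] = \mathbb{E}[(d\mathcal{D}_T/d\mathcal{D}_E)^2]$. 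The control case \eqref{problem:control_var} is identical with $Z$ replaced by $1-Z$ and $p$ by $1-p$.

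The one subtlety I would be careful about — and I expect it to be the main (if modest) obstacle — is justifying that the distributional constraint $W\mathcal{D}_E \overset{d}{=} \mathcal{D}_T$ really is equivalent to the pointwise identity $\mathbb{E}[W \mid D_E] = d\mathcal{D}_T/d\mathcal{D}_E$. This requires $\mathcal{D}_T \ll \mathcal{D}_E$ (guaranteed here since $\mathcal{D}_E = p\mathcal{D}_T + (1-p)\mathcal{D}_C$ with $p > 0$, so the Radon--Nikodym derivative exists and is bounded by $1/p$), and it requires unwinding the definition $W\mathcal{D}_E(A) = \mathbb{E}[W\,\mathbf{1}\{D_E \in A\}] = \mathbb{E}[\mathbb{E}[W\mid D_E]\,\mathbf{1}\{D_E\in A\}]$ to conclude that two measures agreeing on all measurable $A$ forces the conditional-expectation factor to match the density $\mathcal{D}_E$-a.e. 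Once that equivalence is in hand, the remaining argument is the short Jensen computation above, and the optimality of $W_T$ (together with its a.e.\ uniqueness among $\sigma(D_E)$-measurable minimizers) follows immediately.
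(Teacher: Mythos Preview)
Your proposal is correct and follows essentially the same approach as the paper: both arguments show that the constraint $W\mathcal{D}_E\overset{d}{=}\mathcal{D}_T$ pins down $\mathbb{E}[W\mid D_E]$ (the paper writes it as $\mathbb{E}[Z\mid D_E]/p$, you write it as the Radon--Nikodym derivative $d\mathcal{D}_T/d\mathcal{D}_E$, which are the same thing), and then apply conditional Jensen via the tower property to get the lower bound, with equality for the $\sigma(D_E)$-measurable choice $W_T$. Your treatment of the absolute-continuity subtlety is slightly more explicit than the paper's, but the structure of the argument is identical.
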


Theorem 1 implies that our proposed weights, $\frac{\mathbb{E}\left[ Z|X_{E}%
	\right]}{p}$ and $\frac{1-\mathbb{E}\left[ Z|X_{E}\right]}{1-p}$, achieve
maximum data efficiency while adhering to the constraint of no training
distributional shifts. The proof of Theorem \ref{thm:min_var} is provided in
Appendix \ref{appendix:proof}. \label{sec:algo}

\section{Numerical Results}
\label{sec:numerical}
In this section, we present simulation results. In subsection \ref%
{subsection:simu}, we specify the simulation setup and the implementation
details. In subsection \ref{subsection:AB}, we simulate A/B tests to
demonstrate the lower bias and variance of our approach compared to other
methods. In subsection \ref{subsection:AA}, we simulate A/A tests to compare
type I errors of different methods. Additional experiments and results can be found in Appendix \ref{appendix:additional}.

\subsection{Simulation Setups}

\label{subsection:simu} We conducted a simulation inspired by Example 1 in
 Introduction. In this simulation, we consider two types of videos: long
and short, and the recommendation system relies on two metrics: finishing
rates (FR) and stay durations (SD). Users arrive sequentially, and for each
user, there are a total of $N=100$ candidate videos available. These videos
are divided into two equal groups, with half of them being long videos and
the other half being short videos. The platform selects one video from this
pool to show to each user. Furthermore, we assume that the features for
user-video pairs are 10-dimensional, following independent uniform
distributions in the range [0,1]. Additionally, we assume linear models that%
\begin{eqnarray*}
\mathrm{FR}_{\mathrm{short}} &=&\text{Sigmoid}(\beta _{\mathrm{FR,short}}^{\top }X-2.5), \\
	\mathrm{FR}_{\mathrm{long}} &=&\text{Sigmoid}(\beta _{\mathrm{FR,long}}^{\top }X-2.5), \\
\mathrm{SD}_{\mathrm{short}} &\sim &\exp \left( \beta _{\mathrm{SD,short}%
}^{\top }X\right) , \\
\mathrm{SD}_{\mathrm{long}} &\sim &\exp \left( \beta _{\mathrm{SD,long}%
}^{\top }X\right),
\end{eqnarray*}%
where $\mathrm{Sigmoid}$ means the sigmoid function and $\exp \left( \cdot \right) $ means an exponential distribution and 
\begin{eqnarray*}
\beta _{\mathrm{FR,short}} &=&0.9\times \lbrack 0,0.1,0.2,\ldots ,0.9], \\
\beta _{\mathrm{FR,long}} &=&0.6\times \lbrack 0,0.1,0.2,\ldots ,0.9], \\
\beta _{\mathrm{SD,short}} &=&[1,0.9,0.8\ldots .0.1], \\
\beta _{\mathrm{SD,long}} &=&1.5\times \lbrack 1,0.9,0.8\ldots .0.1].
\end{eqnarray*}%
The user's decision to finish watching a video or not follows a Bernoulli
distribution with a probability equal to the finishing rate. By
setting the parameters in this manner, we ensure that short videos generally
have high finishing rates and short stay durations, while long videos
exhibit the opposite characteristics.

The machine learning models employ logistic regression for predicting
finishing rates and linear regression for predicting stay durations. The feature set consists of 10 user-video pair features, along with an
indicator variable that specifies whether the video is long or short. It's
important to note that there is a model misspecification present, as the
true parameters for long and short videos are different. In our machine
learning models, we assume these parameters to be equal, but we introduce an
additional parameter corresponding to the video length indicator for an
adjustment.

We employ Stochastic Gradient Descent (SGD) to train both machine learning
models, employing a batch size of $B=n_1=n_2=\ldots=n_T=128$ for all time
steps. The learning rate is set to $0.1$. Throughout all simulations, we
maintain a fixed value of $T=10000$. Consequently, the total number of users
involved in the experiments amounts to 1,280,000.

The platform recommends the video that yields the highest value among the
100 candidate videos based on the following formula: 
\begin{equation*}
\alpha \widehat{\text{FR}} + \widehat{\text{SD}},
\end{equation*}
where $\widehat{\text{FR}}$ and $\widehat{\text{SD}}$ represent the
predictions generated by the machine learning models. The A/B tests are
designed to assess the difference between two distinct $\alpha$ values. We
focus on three metrics, FR, SD, and the proportion of short videos on the
platform.

We compare our approach to three other methods: data pooling, snapshot, and
data splitting methods.

\begin{itemize}
\item \textbf{Data pooling:} This is the standard naive approach, where
machine learning models are trained on the combined control and treatment
data.

\item \textbf{Snapshot:} In this method, the machine learning models are
never retrained during the A/B tests. Predictions are solely based on the
models' initial snapshot at the beginning of the experiments.

\item \textbf{Data splitting:} Also known as data-diverted, as discussed in %
\citet{holtz2023study}, each model is exclusively trained on the data
obtained from its respective algorithm.
\end{itemize}

While \citet{holtz2023study} also explore cluster randomized experiments, it's worth noting that in our specific context, determining how to cluster users presents challenges. Consequently, we do not make direct comparisons with cluster randomized experiments.   
As we discussed in Section \ref{sec:algo}, the data splitting method may
encounter several challenges:

\begin{itemize}
\item \textbf{High variance.} Since machine learning models can only see a
portion of the data, the lack of data efficiency may lead to high variance
in model estimators, resulting in increased variance in the experimental
metrics.

\item \textbf{External validity.} In our simulation, the data splitting
method is equivalent to reducing the batch size. It is well-known that batch
size plays a crucial role in machine learning, and different batch sizes can
yield fundamentally different performances. Therefore, treatment effect
estimates in scenarios with small batch sizes may not accurately predict
treatment effects in scenarios with large batch sizes, compromising external
validity.

\item \textbf{Experimentation costs.} In today's online platforms, thousands
of experiments run each day. Consequently, experimentation costs cannot be
overlooked, even though each experiment only runs for a relatively short
period. Reducing the data size can compromise the performance of the machine
learning model, potentially leading to suboptimal recommendations and
increased experimentation costs.
\end{itemize}

In our approach, we employ a two hidden layer fully connected network with ReLU
activations to train the weighting model $G_{\theta_W}$. Each layer comprises 64 neurons,
and we utilize the Adam optimizer \citep{kingma2014adam} with a learning
rate of 0.001. Our training process for the weighting model commences after the initial 200 periods. During these initial 200 periods, the control and treatment machine learning models are trained as in the data splitting method.

Subsequently, we will conduct the A/B tests 100 times and create violin
plots to visualize the estimated treatment effects.

\subsection{A/B Tests}

\label{subsection:AB} We first examine the comparison between the control
parameter $\alpha_C=10$ and the treatment parameter $\alpha_T=9$ with a
treatment assignment probability of $p = 1/2$. The logloss of our weighting model is about 0.96 (base 2). This value indicates a performance slightly above that of a purely random guess, which would yield a logloss of 1. Despite the marginal improvement over randomness in terms of logloss, it's crucial to highlight that this translates into significant gains in the accuracy of treatment effect estimation, as we shall see in Figure \ref{fig:ab_p=1/2} and Table \ref{tab:bias_AB_1/2}.
\begin{figure}[!ht]
	\centering 
	\subfigure[treatment effects: proportion]{
		\includegraphics[width=2.1in]{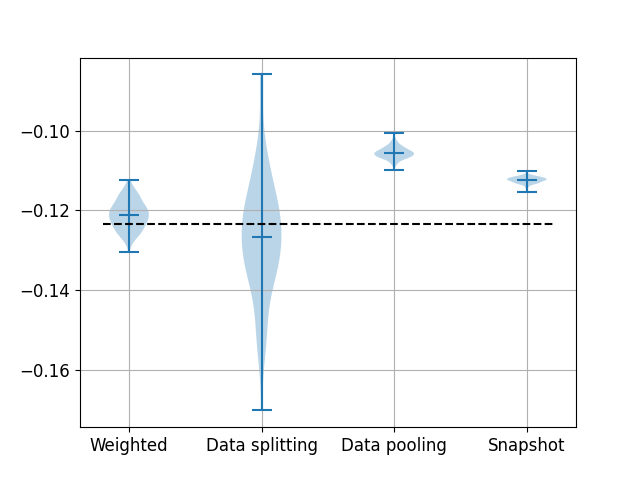}}
	\subfigure[treatment effects: SD]{
		\includegraphics[width=2.1in]{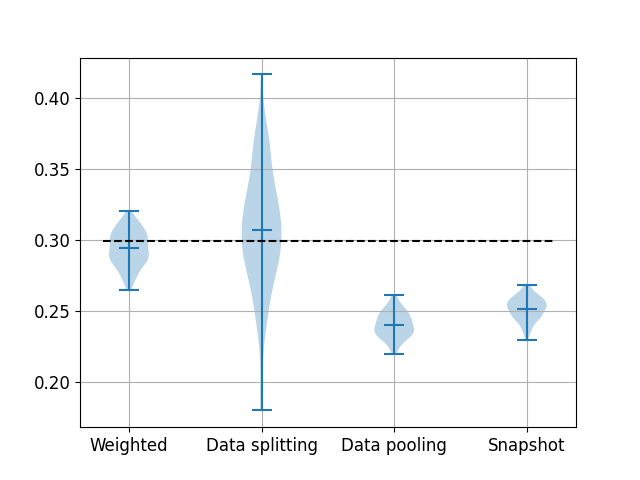}}
	\subfigure[treatment effects: FR]{
		\includegraphics[width=2.1in]{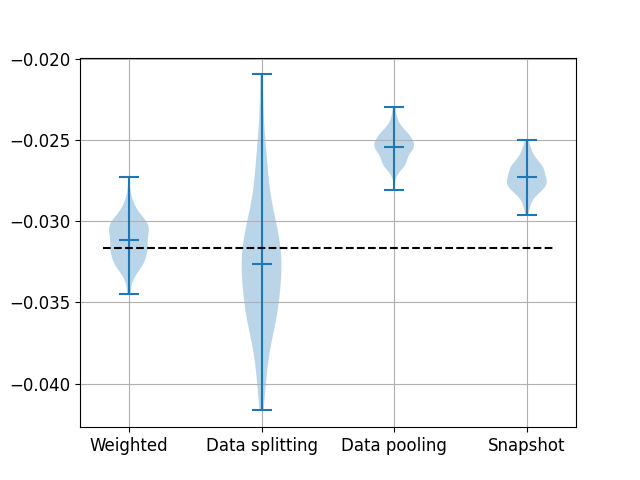}}
	\subfigure[treatment: proportion]{ 
		\includegraphics[width=2.1in]{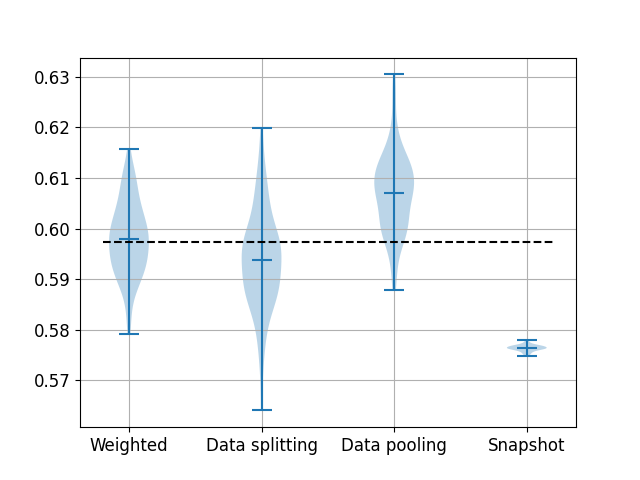}}
	\subfigure[treatment: SD]{ \includegraphics[width=2.1in]{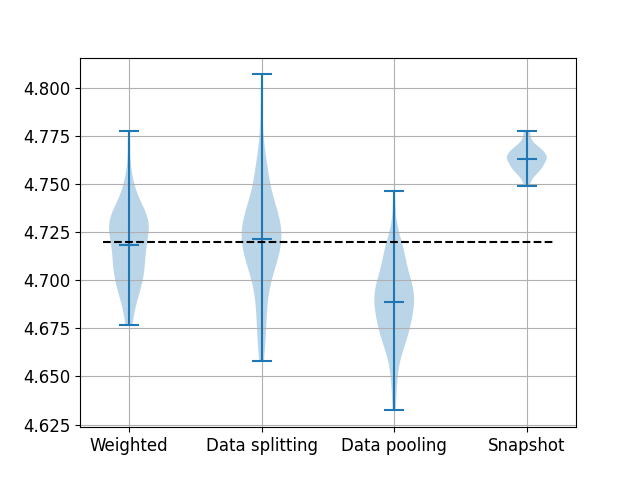}}
	\subfigure[treatment: FR]{
		\includegraphics[width=2.1in]{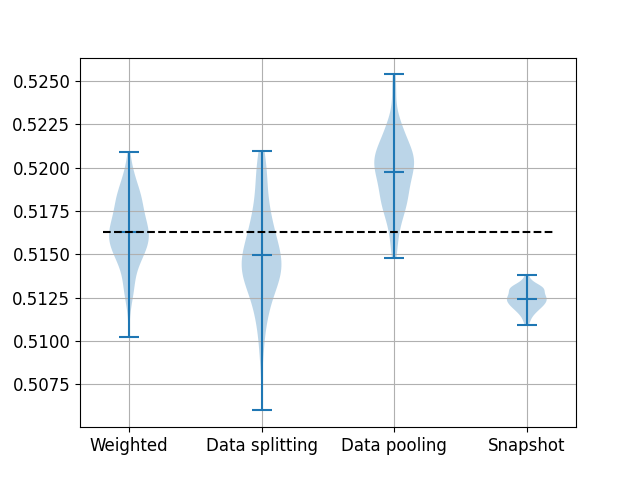}}
	\subfigure[control: proportion]{
		\includegraphics[width=2.1in]{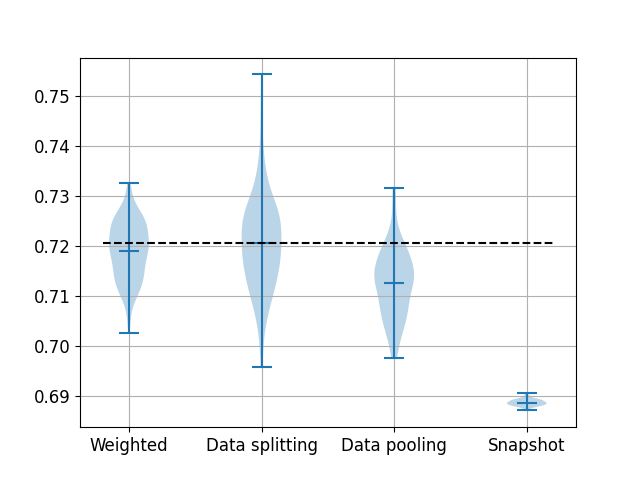}}
	\subfigure[control: SD]{
		\includegraphics[width=2.1in]{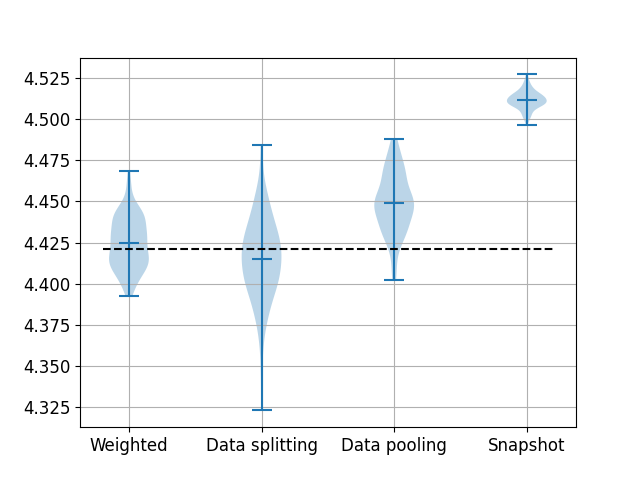}}
	\subfigure[control: FR]{
		\includegraphics[width=2.1in]{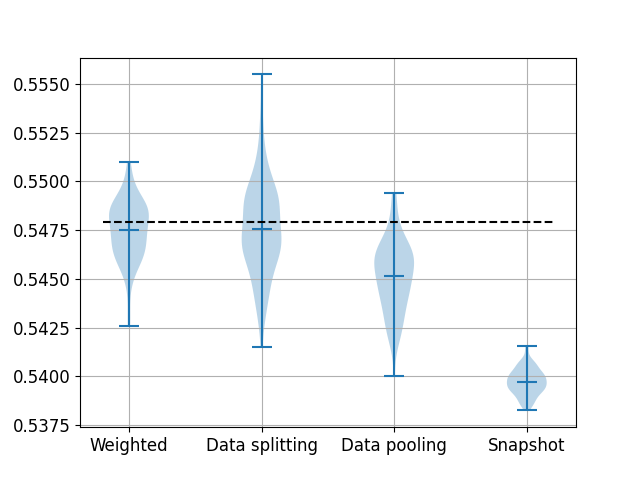}}
	\caption{A/B testing results for $\protect\alpha_C=10$, $\protect\alpha_T=9$%
		, and $p = 1/2$ }
	\label{fig:ab_p=1/2}
\end{figure}

In the figure, the black dotted line represents the true global treatment
effects (GTE), which have been computed through simulation. We present
various estimators obtained from 100 independent A/B tests along with their
respective mean, lower, and upper bounds. Specifically, we provide results
for treatment effects, global treatment, and global control regimes in the
first, second, and third rows, respectively. Additionally, we report results
for the proportion of short videos, SD, and FR in the first, second, and
third columns, respectively.

We further provide information on the bias and standard errors of treatment effect estimators obtained using various methods in Table \ref{tab:bias_AB_1/2}. In each metric, the first column represents the bias in comparison to the true global treatment effect (GTE). The second column displays the standard deviation calculated from the results of the 100 A/B tests. Lastly, the third column showcases the standard error estimates obtained through two-sample t-tests in a single A/B test
i.e.,%
\begin{equation*}
\mathrm{SE}=\sqrt{\frac{\mathrm{Var}(Y|Z=1)}{\sharp \left\{
Z_{i,t}=1\right\} }+\frac{\mathrm{Var}(Y|Z=0)}{\sharp \left\{
Z_{i,t}=0\right\} }}.
\end{equation*}
\begin{table}[!htb]
	\caption{Bias, standard deviation, and standard error estimated from the
		experiment for the metrics in the case that $\protect\alpha_C=10$, $\protect%
		\alpha_T=9$, and $p = 1/2$}
	\label{tab:bias_AB_1/2}\centering
	\begin{tabular}{lccccccccc}
		\toprule & \multicolumn{3}{c}{Proportion of short videos} & 
		\multicolumn{3}{c}{Stay durations} & \multicolumn{3}{c}{Finishing rates}  \\ 
		\midrule & Bias & STD & SE & Bias & STD & SE & Bias & STD & SE \\ 
		Weighted & 0.002 & 0.004 & 0.001 & -0.005 & 0.012 & 0.008 & 0.000 & 0.001 & 
		0.001 \\ 
		Data splitting & -0.003 & 0.015 & 0.001 & 0.008 & 0.042 & 0.008 & -0.001 & 
		0.004 & 0.001 \\ 
		Data pooling & 0.018 & 0.002 & 0.001 & -0.059 & 0.009 & 0.008 & 0.006 & 0.001
		& 0.001 \\ 
		Snapshot & 0.011 & 0.001 & 0.001 & -0.047 & 0.008 & 0.009 & 0.004 & 0.001 & 
		0.001 \\ 
		\bottomrule 
	\end{tabular}%
\end{table}

From Figure \ref{fig:ab_p=1/2} and Table \ref{tab:bias_AB_1/2}, it is evident that our approach consistently demonstrates the lowest bias across all metrics compared to other approaches. The data splitting method also manages to achieve relatively low biases but exhibits significantly higher variance. Furthermore, it's worth noting that the true variance of the data splitting estimator is considerably larger than the standard error estimated from a two-sample t-test. Consequently, this could potentially lead to confidence intervals that underestimate the true level of variability.



In Table \ref{tab:cost_AB_1/2}, we have calculated the experimentation costs. For treatment users, we computed the average treatment values based on the treatment linear fusion formula, i.e.,
$$
 \frac{1}{N_T} \sum_{i=1}^{N_T} \alpha_T \text{Finish}_i + \text{StayDuration}_i, 
$$
where $N_T$ represents the number of users in the treatment group and $\text{Finish}_i$ and $\text{StayDuration}_i$ indicate whether a user finished watching a video and their duration of stay, respectively.  While for control users, we averaged the control values based on the control linear fusion formula:
$$
 \frac{1}{N_C} \sum_{i=1}^{N_C} \alpha_C \text{Finish}_i + \text{StayDuration}_i. 
$$
It's apparent that our approach is only slightly worse than the global treatment/control regime, and the data splitting method incurs the lowest costs, indicating that it results in higher experimental expenses.
\begin{table}[htbp]
\caption{Experimentation values in the case that $\protect\alpha_C=10$, $%
\protect\alpha_T=9$, and $p = 1/2$}
\label{tab:cost_AB_1/2}\centering
\begin{tabular}{lcc}
\toprule & Treatment values & Control values \\ 
\midrule Global & 9.8827 $\pm$ 0.0006 & 9.3523 $\pm$ 0.0006 \\ 
Weighted & 9.8816 $\pm$ 0.0009 & 9.3521 $\pm$ 0.0008 \\ 
Data splitting & 9.8710 $\pm$ 0.0008 & 9.3431 $\pm$ 0.0008 \\ 
Data pooling & 9.8861 $\pm$ 0.0008 & 9.3551 $\pm$ 0.0009 \\ 
Snapshot & 9.8876 $\pm$ 0.0009 & 9.3692 $\pm$ 0.0008 \\ 
\bottomrule &  & 
\end{tabular}%
\end{table}

We proceed to conduct simulations for $p = 0.2$, with the same treatment and control parameters, i.e., $\alpha_C=10$ and $\alpha_T=9$. This scenario is arguably more relevant, as A/B tests in practice often begin with smaller treatment proportions. The results are presented in Figure \ref{fig:ab_p=0.2}, and detailed bias, variance, and cost findings can be found in Tables \ref{tab:bias_AB_0.2} and \ref{tab:cost_AB_0.2}.

Once again, our approach demonstrates the lowest bias and reasonable variance. However, it's important to note that in this case with $p=0.2$, the data splitting method exhibits higher bias and variance compared to the simulation with $p=1/2$.
\begin{figure}[!ht]
\centering 
\subfigure[treatment effects: proportion]{
		\includegraphics[width=2.1in]{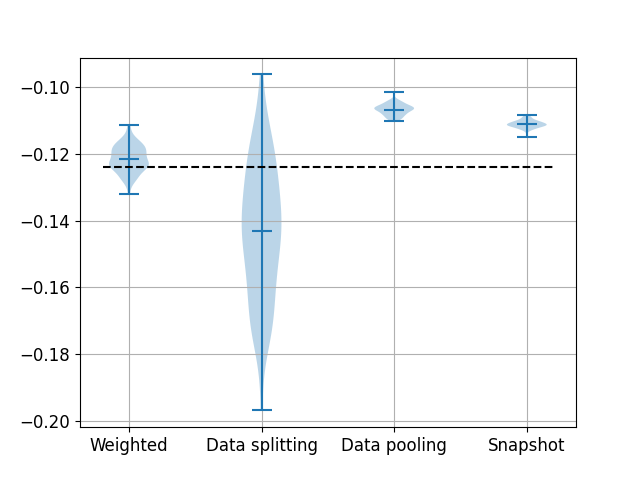}}
\subfigure[treatment effects: SD]{
		\includegraphics[width=2.1in]{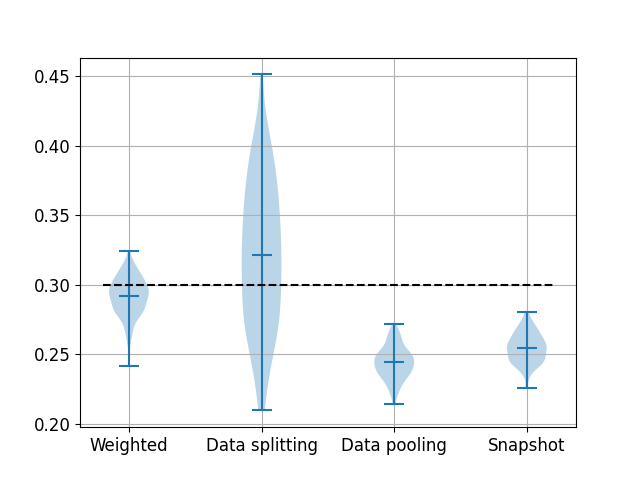}}
\subfigure[treatment effects: FR]{
\includegraphics[width=2.1in]{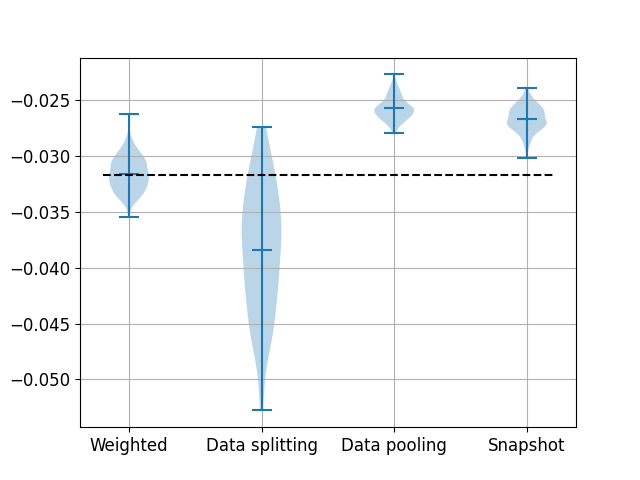}}
\subfigure[treatment: proportion]{ 
		\includegraphics[width=2.1in]{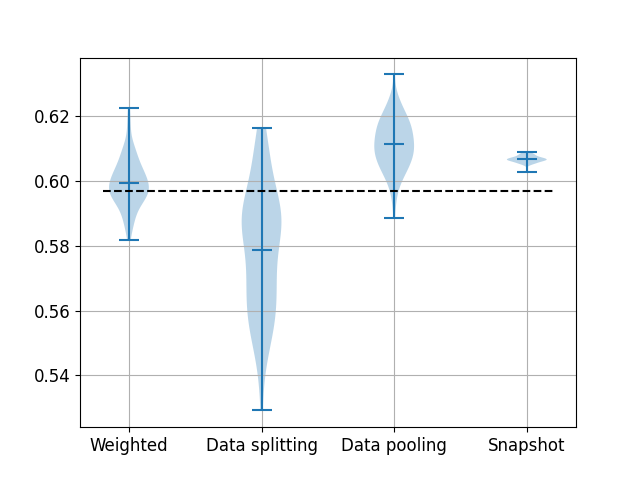}}
\subfigure[treatment: SD
	]{ \includegraphics[width=2.1in]{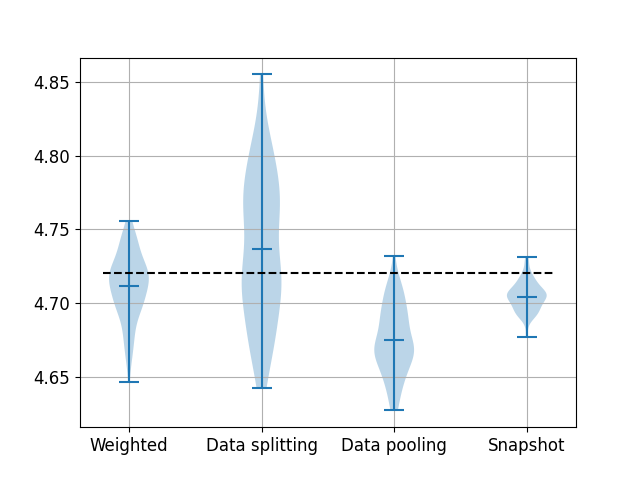}}
\subfigure[treatment: FR]{
\includegraphics[width=2.1in]{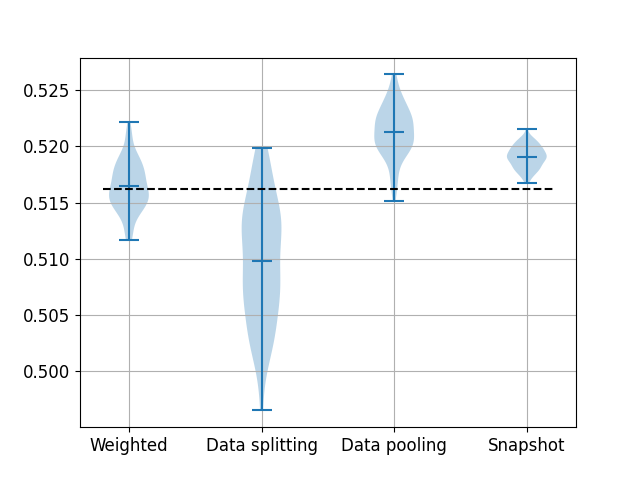}}
\subfigure[control: proportion]{
\includegraphics[width=2.1in]{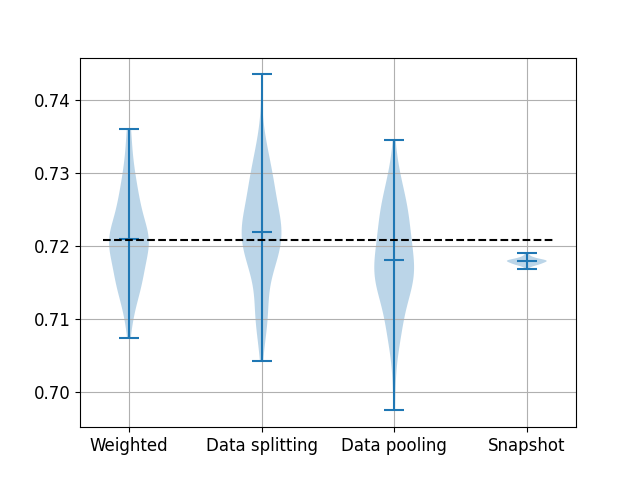}}
\subfigure[control: SD]{
\includegraphics[width=2.1in]{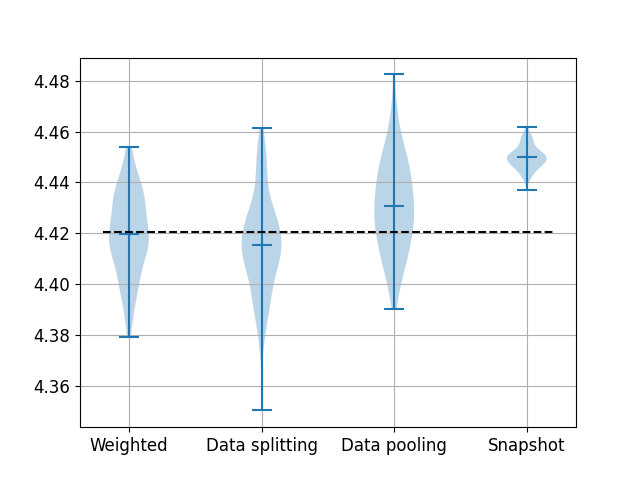}}
\subfigure[control: FR]{
\includegraphics[width=2.1in]{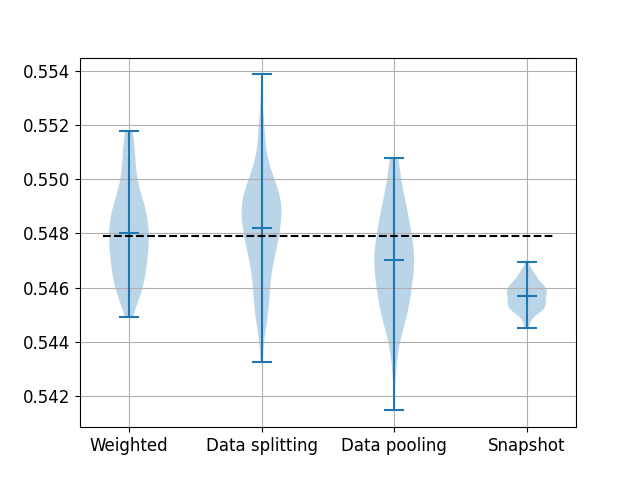}}
\caption{A/B testing results for $\protect\alpha_C=10$, $\protect\alpha_T=9$%
, and $p =0.2$ }
\label{fig:ab_p=0.2}
\end{figure}
\begin{table}[!htb]
\caption{Bias, standard deviation, and standard error estimated from the
experiment for the metrics in the case that $\protect\alpha_C=10$, $\protect%
\alpha_T=9$, and $p = 0.2$}
\label{tab:bias_AB_0.2}\centering
\begin{tabular}{lccccccccc}
\toprule & \multicolumn{3}{c}{Proportion of short videos} & 
\multicolumn{3}{c}{Stay durations} & \multicolumn{3}{c}{Finishing rates}  \\ 
\midrule & Bias & STD & SE & Bias & STD & SE & Bias & STD & SE \\ 
Weighted & 0.002 & 0.004 & 0.001 & -0.008 & 0.014 & 0.011 & 0.000 & 0.002 & 
0.001 \\ 
Data splitting & -0.019 & 0.020 & 0.001 & 0.021 & 0.052 & 0.011 & -0.007 & 
0.006 & 0.001 \\ 
Data pooling & 0.017 & 0.002 & 0.001 & -0.056 & 0.012 & 0.011 & 0.006 & 0.001
& 0.001 \\ 
Snapshot & 0.013 & 0.001 & 0.001 & -0.046 & 0.011 & 0.011 & 0.005 & 0.001 & 
0.001 \\ 
\bottomrule 
\end{tabular}%
\end{table}

\begin{table}[!ht]
\caption{Experimentation values in the case that $\protect\alpha_C=10$, $%
\protect\alpha_T=9$, and $p = 0.2$}
\label{tab:cost_AB_0.2}\centering
\begin{tabular}{lcc}
\toprule & Treatment values & Control values \\ 
\midrule Global & 9.8823 $\pm$ 0.0006 & 9.3515 $\pm$ 0.0005 \\ 
Weighted & 9.8757 $\pm$ 0.0013 & 9.3517 $\pm$ 0.0006 \\ 
Data splitting & 9.8347 $\pm$ 0.0015 & 9.3492 $\pm$ 0.0007 \\ 
Data pooling & 9.8877 $\pm$ 0.0013 & 9.3538 $\pm$ 0.0006 \\ 
Snapshot & 9.8949 $\pm$ 0.0015 & 9.3611 $\pm$ 0.0006 \\ 
\bottomrule
\end{tabular}%
\end{table}
\newpage
\subsection{A/A Tests}
In this section, we have conducted simulations for A/A tests, specifically choosing parameters such as $\alpha_C=\alpha_T=10$ with a treatment assignment probability of $p=1/2$. Since the treatment and control groups share an identical parameter, the global treatment effects should ideally be zero. In Figure \ref{fig:aa_p=1/2}, we present visualizations of treatment effect estimations for four methods. Notably, the weighted training, data pooling, and snapshot methods exhibit similar performance. Table \ref{tab:AAtestp=1/2} offers details on the average estimations and type I errors obtained from various methods, gathered from 100 independent runs of the A/A tests, with a confidence level set at $0.95$.
\begin{figure}[!ht]
	\centering 
	\subfigure[treatment effects: proportion]{
		\includegraphics[width=2.1in]{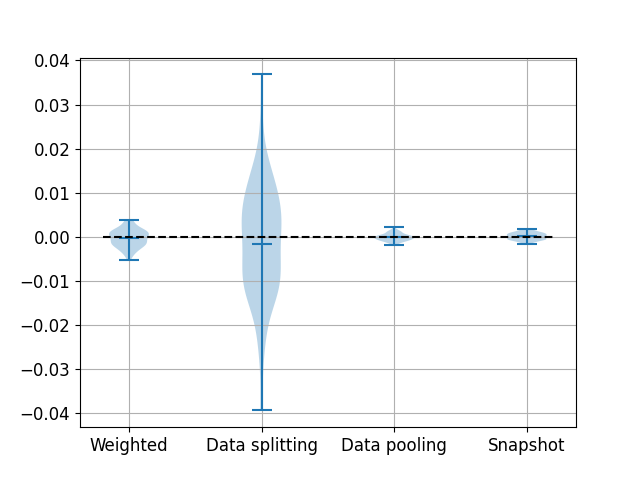}}
	\subfigure[treatment effects: SD]{
		\includegraphics[width=2.1in]{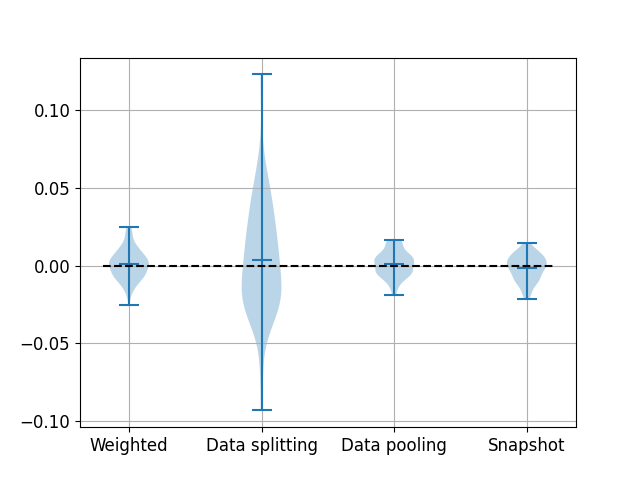}}
	\subfigure[treatment effects: FR]{
		\includegraphics[width=2.1in]{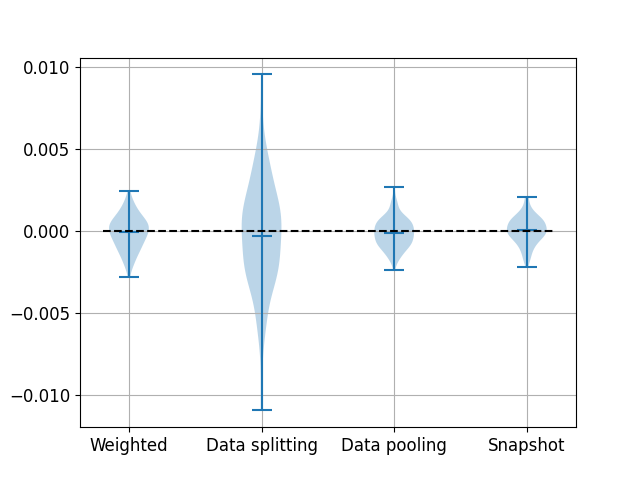}}
	\caption{A/A testing results for $\protect\alpha_C=\alpha_T=10$
		and $p =1/2$}
	\label{fig:aa_p=1/2}
\end{figure}
\begin{table}[!htb]
	\caption{The average estimations and type I error for the A/A test with $\alpha_C=\alpha_T=10$ and $p=1/2$}
	\label{tab:AAtestp=1/2}\centering
	\begin{tabular}{lcccccc}
		\toprule & \multicolumn{2}{c}{Proportion of short videos} & 
		\multicolumn{2}{c}{Stay durations} & \multicolumn{2}{c}{Finishing rates} \\ 
		\midrule & Estimation & Type I error & Estimation & Type I error & Estimation
		& Type I error \\ 
		Weighted & -0.0003 & 0.45  & 0.0008 & 0.09  & -0.0001 & 0.11 \\
		Data splitting & -0.0017 & 0.94  & 0.0039 & 0.65  & -0.0003 & 0.60 \\
		Data pooling & -0.0001 & 0.04  & 0.0011 & 0.07  & -0.0001 & 0.07 \\
		Snapshot & 0.0001 & 0.06  & -0.0015 & 0.06  & 0.0000 & 0.06 \\ 
		\bottomrule
	\end{tabular}%
\end{table}

It's noteworthy that our approach exhibits a slightly larger type I error than the target of 0.05 for the metrics stay durations (SD) and finishing rates (FR), and it demonstrates a worse type I error for the metric proportion of short videos. We attribute this behavior to the sensitivity of the proportion of short videos metric to the starting period of the experiment, which may be more feedback-loop dependent.

On the contrary, the data splitting method yields much higher Type I errors, suggesting that new inference methods should be developed to address this issue.
\label{subsection:AA}


\section{Concluding Remarks}
\label{sec:conclusion}
In this paper, we have introduced a weighted training approach designed to address the interference problem caused by data training loops. Our approach has demonstrated the capability to achieve low bias and reasonable variance. For future research, we have identified several intriguing directions:
\begin{enumerate}
\item \textbf{Single model training:} In our current approach, we still require training two separate models, which can be computationally expensive, especially when dealing with large machine learning models. It would be interesting to explore whether it's possible to train a single model and implement adjustments to mitigate bias effectively. This could lead to more efficient and practical solutions.

\item \textbf{Variance estimation and new inference methods:} Although our approach has shown promise in reducing bias, the variance remains larger than the standard error estimated from the two-sample t-test in some cases. As a result, there is a need for more robust methods for estimating variance and developing new inference techniques that can account for the specific challenges in interference induced by data training loops in A/B tests.
\end{enumerate}
Exploring these directions could further enhance our understanding of this type of interference and lead to more effective and efficient solutions for mitigating its biases.

\section*{Acknowledgement}
We would like to thank Jose Blanchet, Ramesh Johari, Shuangning Li, Zikun Ye, and Xinyu Yue for helpful discussions.

\bibliographystyle{plainnat}
\bibliography{mybib,feedback}

\begin{thebibliography}{65}
\providecommand{\natexlab}[1]{#1}
\providecommand{\url}[1]{\texttt{#1}}
\expandafter\ifx\csname urlstyle\endcsname\relax
  \providecommand{\doi}[1]{doi: #1}\else
  \providecommand{\doi}{doi: \begingroup \urlstyle{rm}\Url}\fi

\bibitem[Aronow and Samii(2017)]{aronow2017estimating}
Peter~M Aronow and Cyrus Samii.
\newblock Estimating average causal effects under general interference, with
  application to a social network experiment.
\newblock 2017.

\bibitem[Bajari et~al.(2021)Bajari, Burdick, Imbens, Masoero, McQueen,
  Richardson, and Rosen]{bajari2021multiple}
Patrick Bajari, Brian Burdick, Guido~W Imbens, Lorenzo Masoero, James McQueen,
  Thomas Richardson, and Ido~M Rosen.
\newblock Multiple randomization designs.
\newblock \emph{arXiv preprint arXiv:2112.13495}, 2021.

\bibitem[Basse et~al.(2016)Basse, Soufiani, and
  Lambert]{basse2016randomization}
Guillaume~W Basse, Hossein~Azari Soufiani, and Diane Lambert.
\newblock Randomization and the pernicious effects of limited budgets on
  auction experiments.
\newblock In \emph{Artificial Intelligence and Statistics}, pages 1412--1420.
  PMLR, 2016.

\bibitem[Basse et~al.(2023)Basse, Ding, and Toulis]{basse2023minimax}
Guillaume~W Basse, Yi~Ding, and Panos Toulis.
\newblock Minimax designs for causal effects in temporal experiments with
  treatment habituation.
\newblock \emph{Biometrika}, 110\penalty0 (1):\penalty0 155--168, 2023.

\bibitem[Blake and Coey(2014)]{blake2014marketplace}
Thomas Blake and Dominic Coey.
\newblock Why marketplace experimentation is harder than it seems: The role of
  test-control interference.
\newblock In \emph{Proceedings of the fifteenth ACM conference on Economics and
  computation}, pages 567--582, 2014.

\bibitem[Bojinov et~al.(2023)Bojinov, Simchi-Levi, and Zhao]{bojinov2023design}
Iavor Bojinov, David Simchi-Levi, and Jinglong Zhao.
\newblock Design and analysis of switchback experiments.
\newblock \emph{Management Science}, 69\penalty0 (7):\penalty0 3759--3777,
  2023.

\bibitem[Boyarsky et~al.(2023)Boyarsky, Namkoong, and
  Pouget-Abadie]{boyarsky2023modeling}
Ariel Boyarsky, Hongseok Namkoong, and Jean Pouget-Abadie.
\newblock Modeling interference using experiment roll-out.
\newblock \emph{arXiv preprint arXiv:2305.10728}, 2023.

\bibitem[Bright et~al.(2022)Bright, Delarue, and Lobel]{bright2022reducing}
Ido Bright, Arthur Delarue, and Ilan Lobel.
\newblock Reducing marketplace interference bias via shadow prices.
\newblock \emph{arXiv preprint arXiv:2205.02274}, 2022.

\bibitem[Candogan et~al.(2023)Candogan, Chen, and
  Niazadeh]{candogan2023correlated}
Ozan Candogan, Chen Chen, and Rad Niazadeh.
\newblock Correlated cluster-based randomized experiments: Robust variance
  minimization.
\newblock \emph{Management Science}, 2023.

\bibitem[Chaney et~al.(2018)Chaney, Stewart, and
  Engelhardt]{chaney2018algorithmic}
Allison~JB Chaney, Brandon~M Stewart, and Barbara~E Engelhardt.
\newblock How algorithmic confounding in recommendation systems increases
  homogeneity and decreases utility.
\newblock In \emph{Proceedings of the 12th ACM conference on recommender
  systems}, pages 224--232, 2018.

\bibitem[Chawla et~al.(2016)Chawla, Hartline, and Nekipelov]{chawla2016b}
Shuchi Chawla, Jason Hartline, and Denis Nekipelov.
\newblock A/b testing of auctions.
\newblock In \emph{Proceedings of the 2016 ACM Conference on Economics and
  Computation}, pages 19--20, 2016.

\bibitem[Cook et~al.(2023)Cook, Mishler, and Ramdas]{cook2023semiparametric}
Thomas Cook, Alan Mishler, and Aaditya Ramdas.
\newblock Semiparametric efficient inference in adaptive experiments.
\newblock \emph{arXiv preprint arXiv:2311.18274}, 2023.

\bibitem[Durrett(2019)]{durrett2019probability}
Rick Durrett.
\newblock \emph{Probability: theory and examples}, volume~49.
\newblock Cambridge university press, 2019.

\bibitem[Eckles et~al.(2017)Eckles, Karrer, and Ugander]{eckles2017design}
Dean Eckles, Brian Karrer, and Johan Ugander.
\newblock Design and analysis of experiments in networks: Reducing bias from
  interference.
\newblock \emph{Journal of Causal Inference}, 5\penalty0 (1), 2017.

\bibitem[Farias et~al.(2022)Farias, Li, Peng, and Zheng]{farias2022markovian}
Vivek Farias, Andrew Li, Tianyi Peng, and Andrew Zheng.
\newblock Markovian interference in experiments.
\newblock \emph{Advances in Neural Information Processing Systems},
  35:\penalty0 535--549, 2022.

\bibitem[Farias et~al.(2023)Farias, Li, Peng, Ren, Zhang, and
  Zheng]{farias2023correcting}
Vivek Farias, Hao Li, Tianyi Peng, Xinyuyang Ren, Huawei Zhang, and Andrew
  Zheng.
\newblock Correcting for interference in experiments: A case study at douyin.
\newblock In \emph{Proceedings of the 17th ACM Conference on Recommender
  Systems}, pages 455--466, 2023.

\bibitem[Fradkin(2015)]{fradkin2015search}
Andrey Fradkin.
\newblock Search frictions and the design of online marketplaces.
\newblock In \emph{The Third Conference on Auctions, Market Mechanisms and
  Their Applications}, 2015.

\bibitem[Glynn et~al.(2020)Glynn, Johari, and Rasouli]{glynn2020adaptive}
Peter~W Glynn, Ramesh Johari, and Mohammad Rasouli.
\newblock Adaptive experimental design with temporal interference: A maximum
  likelihood approach.
\newblock \emph{Advances in Neural Information Processing Systems},
  33:\penalty0 15054--15064, 2020.

\bibitem[Goli et~al.(2023)Goli, Lambrecht, and Yoganarasimhan]{goli2023bias}
Ali Goli, Anja Lambrecht, and Hema Yoganarasimhan.
\newblock A bias correction approach for interference in ranking experiments.
\newblock \emph{Marketing Science}, 2023.

\bibitem[Gui et~al.(2015)Gui, Xu, Bhasin, and Han]{gui2015network}
Huan Gui, Ya~Xu, Anmol Bhasin, and Jiawei Han.
\newblock Network a/b testing: From sampling to estimation.
\newblock In \emph{Proceedings of the 24th International Conference on World
  Wide Web}, pages 399--409, 2015.

\bibitem[Guo et~al.(2023)Guo, Naeff, Nikulkov, and Zhu]{guo2023evaluating}
Hongbo Guo, Ruben Naeff, Alex Nikulkov, and Zheqing Zhu.
\newblock Evaluating online bandit exploration in large-scale recommender
  system.
\newblock In \emph{KDD-23 Workshop on Multi-Armed Bandits and Reinforcement
  Learning: Advancing Decision Making in E-Commerce and Beyond}, 2023.

\bibitem[Ha-Thuc et~al.(2020)Ha-Thuc, Dutta, Mao, Wood, and
  Liu]{ha2020counterfactual}
Viet Ha-Thuc, Avishek Dutta, Ren Mao, Matthew Wood, and Yunli Liu.
\newblock A counterfactual framework for seller-side a/b testing on
  marketplaces.
\newblock In \emph{Proceedings of the 43rd International ACM SIGIR Conference
  on Research and Development in Information Retrieval}, pages 2288--2296,
  2020.

\bibitem[Han et~al.(2023)Han, Li, Mao, and Wu]{han2023detecting}
Kevin Han, Shuangning Li, Jialiang Mao, and Han Wu.
\newblock Detecting interference in online controlled experiments with
  increasing allocation.
\newblock In \emph{Proceedings of the 29th ACM SIGKDD Conference on Knowledge
  Discovery and Data Mining}, pages 661--672, 2023.

\bibitem[Harshaw et~al.(2023)Harshaw, S{\"a}vje, Eisenstat, Mirrokni, and
  Pouget-Abadie]{harshaw2023design}
Christopher Harshaw, Fredrik S{\"a}vje, David Eisenstat, Vahab Mirrokni, and
  Jean Pouget-Abadie.
\newblock Design and analysis of bipartite experiments under a linear
  exposure-response model.
\newblock \emph{Electronic Journal of Statistics}, 17\penalty0 (1):\penalty0
  464--518, 2023.

\bibitem[Holtz and Aral(2020)]{holtz2020limiting}
David Holtz and Sinan Aral.
\newblock Limiting bias from test-control interference in online marketplace
  experiments.
\newblock \emph{arXiv preprint arXiv:2004.12162}, 2020.

\bibitem[Holtz et~al.(2020)Holtz, Lobel, Liskovich, and
  Aral]{holtz2020reducing}
David Holtz, Ruben Lobel, Inessa Liskovich, and Sinan Aral.
\newblock Reducing interference bias in online marketplace pricing experiments.
\newblock \emph{arXiv preprint arXiv:2004.12489}, 2020.

\bibitem[Holtz et~al.(2023)Holtz, Brennan, and Pouget-Abadie]{holtz2023study}
David Holtz, Jennifer Brennan, and Jean Pouget-Abadie.
\newblock A study of" symbiosis bias" in a/b tests of recommendation
  algorithms.
\newblock \emph{arXiv preprint arXiv:2309.07107}, 2023.

\bibitem[Hu and Wager(2022)]{hu2022switchback}
Yuchen Hu and Stefan Wager.
\newblock Switchback experiments under geometric mixing.
\newblock \emph{arXiv preprint arXiv:2209.00197}, 2022.

\bibitem[Hudgens and Halloran(2008)]{hudgens2008toward}
Michael~G Hudgens and M~Elizabeth Halloran.
\newblock Toward causal inference with interference.
\newblock \emph{Journal of the American Statistical Association}, 103\penalty0
  (482):\penalty0 832--842, 2008.

\bibitem[Imbens and Rubin(2015)]{imbens2015causal}
Guido~W Imbens and Donald~B Rubin.
\newblock \emph{Causal inference in statistics, social, and biomedical
  sciences}.
\newblock Cambridge University Press, 2015.

\bibitem[Jadidinejad et~al.(2020)Jadidinejad, Macdonald, and
  Ounis]{jadidinejad2020using}
Amir~H Jadidinejad, Craig Macdonald, and Iadh Ounis.
\newblock Using exploration to alleviate closed loop effects in recommender
  systems.
\newblock In \emph{Proceedings of the 43rd International ACM SIGIR Conference
  on Research and Development in Information Retrieval}, pages 2025--2028,
  2020.

\bibitem[Johari et~al.(2022)Johari, Li, Liskovich, and
  Weintraub]{johari2022experimental}
Ramesh Johari, Hannah Li, Inessa Liskovich, and Gabriel~Y Weintraub.
\newblock Experimental design in two-sided platforms: An analysis of bias.
\newblock \emph{Management Science}, 2022.

\bibitem[Khenissi(2022)]{khenissi2022modeling}
Sami Khenissi.
\newblock Modeling and debiasing feedback loops in collaborative filtering
  recommender systems.
\newblock 2022.

\bibitem[Kingma and Ba(2014)]{kingma2014adam}
Diederik~P Kingma and Jimmy Ba.
\newblock Adam: A method for stochastic optimization.
\newblock \emph{arXiv preprint arXiv:1412.6980}, 2014.

\bibitem[Kohavi et~al.(2020)Kohavi, Tang, and Xu]{kohavi2020trustworthy}
Ron Kohavi, Diane Tang, and Ya~Xu.
\newblock \emph{Trustworthy online controlled experiments: A practical guide to
  a/b testing}.
\newblock Cambridge University Press, 2020.

\bibitem[Krauth et~al.(2022)Krauth, Wang, and Jordan]{krauth2022breaking}
Karl Krauth, Yixin Wang, and Michael~I Jordan.
\newblock Breaking feedback loops in recommender systems with causal inference.
\newblock \emph{arXiv preprint arXiv:2207.01616}, 2022.

\bibitem[Kuang and Wager(2023)]{kuang2023weak}
Xu~Kuang and Stefan Wager.
\newblock Weak signal asymptotics for sequentially randomized experiments.
\newblock \emph{Management Science}, 2023.

\bibitem[Li et~al.(2022)Li, Zhao, Johari, and Weintraub]{li2022interference}
Hannah Li, Geng Zhao, Ramesh Johari, and Gabriel~Y Weintraub.
\newblock Interference, bias, and variance in two-sided marketplace
  experimentation: Guidance for platforms.
\newblock In \emph{Proceedings of the ACM Web Conference 2022}, pages 182--192,
  2022.

\bibitem[Li and Wager(2022)]{li2022random}
Shuangning Li and Stefan Wager.
\newblock Random graph asymptotics for treatment effect estimation under
  network interference.
\newblock \emph{The Annals of Statistics}, 50\penalty0 (4):\penalty0
  2334--2358, 2022.

\bibitem[Li et~al.(2023)Li, Johari, Wager, and Xu]{li2023experimenting}
Shuangning Li, Ramesh Johari, Stefan Wager, and Kuang Xu.
\newblock Experimenting under stochastic congestion.
\newblock \emph{arXiv preprint arXiv:2302.12093}, 2023.

\bibitem[Liao and Kroer(2023)]{liao2023statistical}
Luofeng Liao and Christian Kroer.
\newblock Statistical inference and a/b testing for first-price pacing
  equilibria.
\newblock \emph{arXiv preprint arXiv:2301.02276}, 2023.

\bibitem[Liao et~al.(2024)Liao, Kroer, Leonenkov, Schrijvers, Shi, Stier-Moses,
  and Zhang]{liao2024interference}
Luofeng Liao, Christian Kroer, Sergei Leonenkov, Okke Schrijvers, Liang Shi,
  Nicolas Stier-Moses, and Congshan Zhang.
\newblock Interference among first-price pacing equilibria: A bias and variance
  analysis.
\newblock \emph{arXiv preprint arXiv:2402.07322}, 2024.

\bibitem[Liu et~al.(2021)Liu, Mao, and Kang]{liu2021trustworthy}
Min Liu, Jialiang Mao, and Kang Kang.
\newblock Trustworthy and powerful online marketplace experimentation with
  budget-split design.
\newblock In \emph{Proceedings of the 27th ACM SIGKDD Conference on Knowledge
  Discovery \& Data Mining}, pages 3319--3329, 2021.

\bibitem[Mansoury et~al.(2020)Mansoury, Abdollahpouri, Pechenizkiy, Mobasher,
  and Burke]{mansoury2020feedback}
Masoud Mansoury, Himan Abdollahpouri, Mykola Pechenizkiy, Bamshad Mobasher, and
  Robin Burke.
\newblock Feedback loop and bias amplification in recommender systems.
\newblock In \emph{Proceedings of the 29th ACM international conference on
  information \& knowledge management}, pages 2145--2148, 2020.

\bibitem[Munro et~al.(2021)Munro, Wager, and Xu]{munro2021treatment}
Evan Munro, Stefan Wager, and Kuang Xu.
\newblock Treatment effects in market equilibrium.
\newblock \emph{arXiv preprint arXiv:2109.11647}, 2021.

\bibitem[Musgrave et~al.(2023)Musgrave, Han, and Gupta]{musgrave2023measuring}
Paul Musgrave, Cuize Han, and Parth Gupta.
\newblock Measuring service-level learning effects in search via
  query-randomized experiments.
\newblock 2023.

\bibitem[Nandy et~al.(2021)Nandy, Venugopalan, Lo, and Chatterjee]{nandy2021b}
Preetam Nandy, Divya Venugopalan, Chun Lo, and Shaunak Chatterjee.
\newblock A/b testing for recommender systems in a two-sided marketplace.
\newblock \emph{Advances in Neural Information Processing Systems},
  34:\penalty0 6466--6477, 2021.

\bibitem[Ni et~al.(2023)Ni, Bojinov, and Zhao]{ni2023design}
Tu~Ni, Iavor Bojinov, and Jinglong Zhao.
\newblock Design of panel experiments with spatial and temporal interference.
\newblock \emph{Available at SSRN 4466598}, 2023.

\bibitem[Pan et~al.(2021)Pan, Cui, Wen, Chen, Zhang, and
  Wang]{pan2021correcting}
Weishen Pan, Sen Cui, Hongyi Wen, Kun Chen, Changshui Zhang, and Fei Wang.
\newblock Correcting the user feedback-loop bias for recommendation systems.
\newblock \emph{arXiv preprint arXiv:2109.06037}, 2021.

\bibitem[Pearl(2000)]{pearl2000models}
Judea Pearl.
\newblock Causality: Models, reasoning and inference.
\newblock \emph{Cambridge, UK: CambridgeUniversityPress}, 19\penalty0
  (2):\penalty0 3, 2000.

\bibitem[Pouget-Abadie et~al.(2019)Pouget-Abadie, Aydin, Schudy, Brodersen, and
  Mirrokni]{pouget2019variance}
Jean Pouget-Abadie, Kevin Aydin, Warren Schudy, Kay Brodersen, and Vahab
  Mirrokni.
\newblock Variance reduction in bipartite experiments through correlation
  clustering.
\newblock \emph{Advances in Neural Information Processing Systems}, 32, 2019.

\bibitem[Qin and Russo(2022)]{qin2022adaptivity}
Chao Qin and Daniel Russo.
\newblock Adaptivity and confounding in multi-armed bandit experiments.
\newblock \emph{arXiv preprint arXiv:2202.09036}, 2022.

\bibitem[Rosenbaum and Rubin(1983)]{rosenbaum1983central}
Paul~R Rosenbaum and Donald~B Rubin.
\newblock The central role of the propensity score in observational studies for
  causal effects.
\newblock \emph{Biometrika}, 70\penalty0 (1):\penalty0 41--55, 1983.

\bibitem[Si et~al.(2022)Si, Gultekin, Blanchet, and Flores]{si2022optimal}
Nian Si, San Gultekin, Jose Blanchet, and Aaron Flores.
\newblock Optimal bidding and experimentation for multi-layer auctions in
  online advertising.
\newblock \emph{Available at SSRN}, 2022.
\newblock URL
  \url{https://papers.ssrn.com/sol3/papers.cfm?abstract_id=4358914}.

\bibitem[Simchi-Levi and Wang(2022)]{simchi2022multi}
David Simchi-Levi and Chonghuan Wang.
\newblock Multi-armed bandit experimental design: Online decision-making and
  adaptive inference.
\newblock \emph{Available at SSRN 4224969}, 2022.

\bibitem[Ugander and Yin(2023)]{ugander2023randomized}
Johan Ugander and Hao Yin.
\newblock Randomized graph cluster randomization.
\newblock \emph{Journal of Causal Inference}, 11\penalty0 (1):\penalty0
  20220014, 2023.

\bibitem[Ugander et~al.(2013)Ugander, Karrer, Backstrom, and
  Kleinberg]{ugander2013graph}
Johan Ugander, Brian Karrer, Lars Backstrom, and Jon Kleinberg.
\newblock Graph cluster randomization: Network exposure to multiple universes.
\newblock In \emph{Proceedings of the 19th ACM SIGKDD international conference
  on Knowledge discovery and data mining}, pages 329--337, 2013.

\bibitem[Wager and Xu(2021)]{wager2021experimenting}
Stefan Wager and Kuang Xu.
\newblock Experimenting in equilibrium.
\newblock \emph{Management Science}, 67\penalty0 (11):\penalty0 6694--6715,
  2021.

\bibitem[Wang and Ba(2023)]{Wang-ba-Producer2023}
Yan Wang and Shan Ba.
\newblock Producer-side experiments based on counterfactual interleaving
  designs for online recommender systems.
\newblock \emph{arXiv preprint arXiv:2310.16294}, 2023.

\bibitem[Xiong et~al.(2019)Xiong, Athey, Bayati, and Imbens]{xiong2019optimal}
Ruoxuan Xiong, Susan Athey, Mohsen Bayati, and Guido Imbens.
\newblock Optimal experimental design for staggered rollouts.
\newblock \emph{arXiv preprint arXiv:1911.03764}, 2019.

\bibitem[Xiong et~al.(2023{\natexlab{a}})Xiong, Chin, Bayati, and
  Taylor]{xiong2023data}
Ruoxuan Xiong, Alex Chin, Mohsen Bayati, and Sean Taylor.
\newblock Data-driven switchback design.
\newblock \emph{preprint}, 2023{\natexlab{a}}.
\newblock URL
  \url{https://www.ruoxuanxiong.com/data-driven-switchback-design.pdf}.

\bibitem[Xiong et~al.(2023{\natexlab{b}})Xiong, Chin, and
  Taylor]{xiong2023bias}
Ruoxuan Xiong, Alex Chin, and Sean Taylor.
\newblock Bias-variance tradeoffs for designing simultaneous temporal
  experiments.
\newblock In \emph{The KDD'23 Workshop on Causal Discovery, Prediction and
  Decision}, pages 115--131. PMLR, 2023{\natexlab{b}}.

\bibitem[Yang et~al.(2023)Yang, Wang, and Ton]{yang2023rectifying}
Mengyue Yang, Jun Wang, and Jean-Francois Ton.
\newblock Rectifying unfairness in recommendation feedback loop.
\newblock In \emph{Proceedings of the 46th international ACM SIGIR Conference
  on Research and Development in Information Retrieval}, pages 28--37, 2023.

\bibitem[Ye et~al.(2023)Ye, Zhang, Zhang, Zhang, Chen, and Xu]{ye2023cold}
Zikun Ye, Dennis~J Zhang, Heng Zhang, Renyu Zhang, Xin Chen, and Zhiwei Xu.
\newblock Cold start to improve market thickness on online advertising
  platforms: Data-driven algorithms and field experiments.
\newblock \emph{Management Science}, 69\penalty0 (7):\penalty0 3838--3860,
  2023.

\bibitem[Yu et~al.(2022)Yu, Airoldi, Borgs, and Chayes]{yu2022estimating}
Christina~Lee Yu, Edoardo~M Airoldi, Christian Borgs, and Jennifer~T Chayes.
\newblock Estimating the total treatment effect in randomized experiments with
  unknown network structure.
\newblock \emph{Proceedings of the National Academy of Sciences}, 119\penalty0
  (44):\penalty0 e2208975119, 2022.

\end{thebibliography}
\newpage 
\begin{appendices} \section{Proofs} \label{appendix:proof} 
	\begin{proof}[Proof of Lemma \ref{lma:equal}]
		By the causal graph (Figure \ref{figure:causal}), we have $%
		Y_{E}\bot Z|X_{E},$ which yields%
		\begin{equation*}
			\mathbb{E}\left[ Z|X_{E}\right] =\mathbb{E}\left[ Z|X_{E},Y_{E}\right] =%
			\mathbb{E}\left[ Z|D_{E}\right] .
		\end{equation*}%
		For any measurable set $A\subset \mathbb{R}^d\times  \mathbb{R}^m$, we have 
		\begin{equation*}
			W_{T}\mathcal{D}_{E}(A){=\frac{1}{p}}\mathbb{E}\left[ \mathbb{E}\left[
			Z|D_{E}\right] I\left\{ D_{E}\in A\right\} \right] 
		\end{equation*}%
		Due to the property of the conditional expectation \citep[Theorem 5.1.7]{durrett2019probability}, we have
		\begin{equation*}
			\mathbb{E}\left[ \mathbb{E}\left[ Z|D_{E}\right] I\left\{ D_{E}\in A\right\} %
			\right] =\mathbb{E}\left[ \mathbb{E}\left[ I\left\{ D_{E}\in A\right\}
			Z|D_{E}\right] \right] =\mathbb{E}\left[ I\left\{ D_{E}\in A\right\} Z\right]
		\end{equation*}%
		Recall that $D_{E}=D_{T}Z+D_{C}\left( 1-Z\right) ,$ we have 
		\begin{equation*}
			\mathbb{E}\left[ Z I\left\{ D_{E}\in A\right\} \right] =\mathbb{E}%
			\left[ I\{Z=1\}I\left\{ D_{T}\in A\right\} \right] .
		\end{equation*}%
		Because of the independence of $Z$ and $D_{T},$ we have 
		\begin{equation*}
			\frac{1}{p}\mathbb{E}\left[ I\{Z=1\}I\left\{ D_{E}\in A\right\} \right] =%
			\mathbb{E}\left[ I\left\{ D_{T}\in A\right\} \right] .
		\end{equation*}.
	\end{proof}

\begin{proof}[Proof of Theorem \ref{thm:min_var}] We will focus on the treatment problem (\ref{problem:control_var}),
	as the control problem follows an identical approach. Recall that 
	\begin{equation*}
		W\mathcal{D}_{E}\mathcal{(}A\mathcal{)=}\mathbb{E}\left[ WI\left\{ D_{E}\in
		A\right\} \right] =\mathbb{E}\left[ \mathbb{E}\left[ W|D_{E}\right] I\left\{
		D_{E}\in A\right\} \right] 
	\end{equation*}
	for any measurable set $A$ in $\mathcal{X}\times \mathcal{Y}.$ On the other
	hand, we have 
	\begin{eqnarray*}
		p\mathbb{E}\left[ I\left\{ D_{T}\in A\right\} \right]  &=&\mathbb{E}\left[ Z%
		\right] \mathbb{E}\left[ I\left\{ D_{T}\in A\right\} \right] =\mathbb{E}%
		\left[ ZI\left\{ D_{T}\in A\right\} \right]  \\
		&=&\mathbb{E}\left[ ZI\left\{ D_{E}\in A\right\} \right] .
	\end{eqnarray*}%
	Therefore, the constraint means that 
	\begin{eqnarray*}
		\mathbb{E}\left[ \mathbb{E}\left[ W|D_{E}\right] I\left\{ D_{E}\in A\right\} %
		\right]  &=&\mathbb{E}\left[ I\left\{ D_{T}\in A\right\} \right]  \\
		&=&\mathbb{E}\left[ \left( Z/p\right) I\left\{ D_{E}\in A\right\} \right] ,
	\end{eqnarray*}%
	for any measurable set $A$ in $\mathcal{X}\times \mathcal{Y}.$ By the
	definition of the conditional expectation \citep[Section 5.1]{durrett2019probability}, we have 
	\begin{equation*}
		\mathbb{E}\left[ W|D_{E}\right] =\frac{1}{p}\mathbb{E}\left[ Z|D_{E}\right] .
	\end{equation*}%
	By Theorem 5.1.3 in \citet{durrett2019probability}, we have
	\begin{equation*}
		\mathbb{E}\left[ W^{2}\right] =\mathbb{E}\left[ \mathbb{E}\left[ W^{2}|D_{E}%
		\right] \right] \geq \mathbb{E}\left[ \left( \mathbb{E}\left[ W|D_{E}\right]
		\right) ^{2}\right] =\mathbb{E}\left[ \left( \mathbb{E}\left[ Z|D_{E}\right]
		/p\right) ^{2}\right] .
	\end{equation*}
We conclude the proof by noting that 
$$ \mathbb{E}\left[ Z|D_{E}\right] =\mathbb{E} \left[ Z|X_{E}\right],$$
as also shown in the proof of Lemma \ref{lma:equal}.
\end{proof}
\section{Additional Numerical Results}
\label{appendix:additional}
\subsection{A/B Tests}

In this subsection, we present additional A/B testing simulations. Firstly, we consider $\alpha_C=10$, $\alpha_T=8$, and $p = 1/2$, and the results are visualized in Figure \ref{fig:ab_p=1/2_10VS8}, while detailed bias, variance, and cost findings can be found in Tables \ref{tab:bias_AB_1/2_10_8} and \ref{tab:cost_AB_0.5_10_8}. Additionally, we explore the scenario with $\alpha_C=10$, $\alpha_T=9$, and $p = 0.3$, the results of which are illustrated in Figure \ref{fig:ab_p=0.3}, and detailed bias, variance and cost results can be found in Tables \ref{tab:bias_AB_0.3} and \ref{tab:cost_AB_0.3_10_9}. 
\begin{figure}[!ht]
	\centering 
	\subfigure[treatment effects: proportion]{
		\includegraphics[width=2.1in]{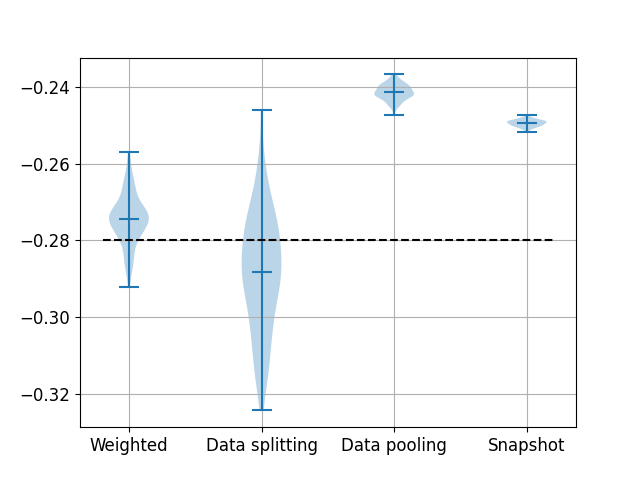}}
	\subfigure[treatment effects: SD]{
		\includegraphics[width=2.1in]{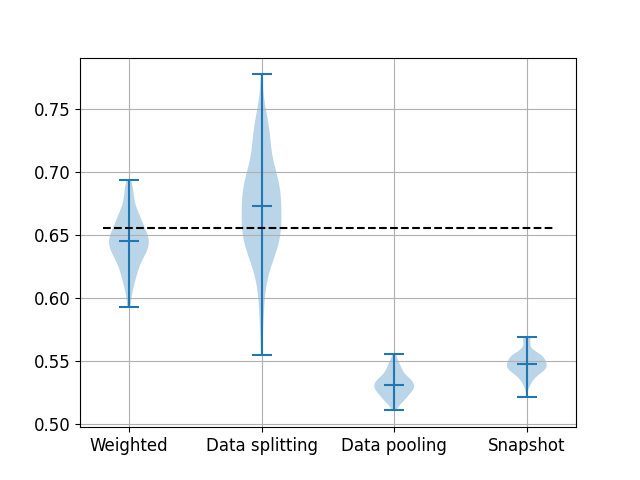}}
	\subfigure[treatment effects: FR]{
		\includegraphics[width=2.1in]{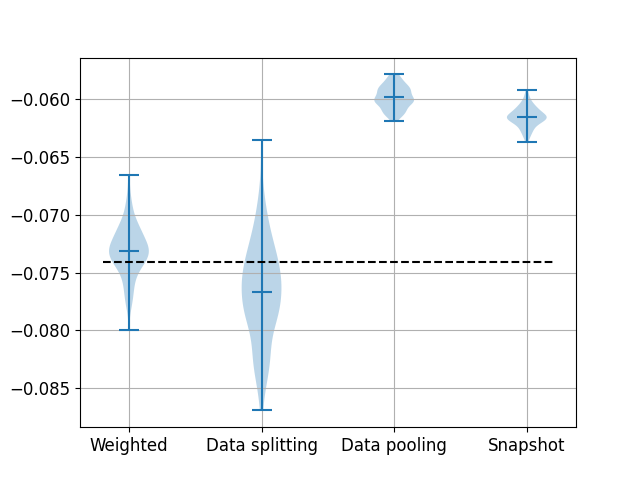}}
	\subfigure[treatment: proportion]{ 
		\includegraphics[width=2.1in]{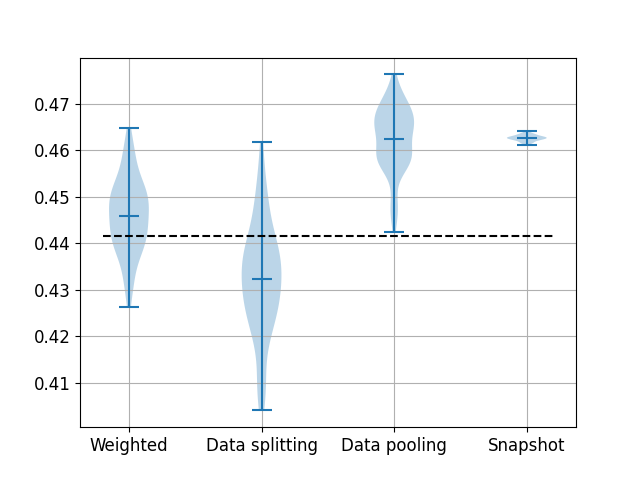}}
	\subfigure[treatment: SD
	]{ \includegraphics[width=2.1in]{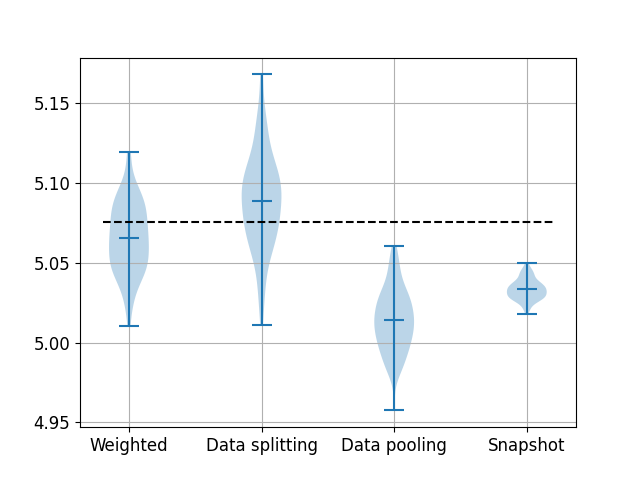}}
	\subfigure[treatment: FR]{
		\includegraphics[width=2.1in]{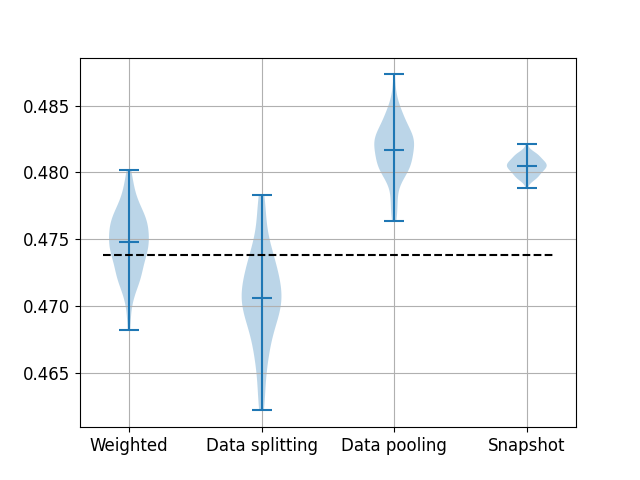}}
	\subfigure[control: proportion]{
		\includegraphics[width=2.1in]{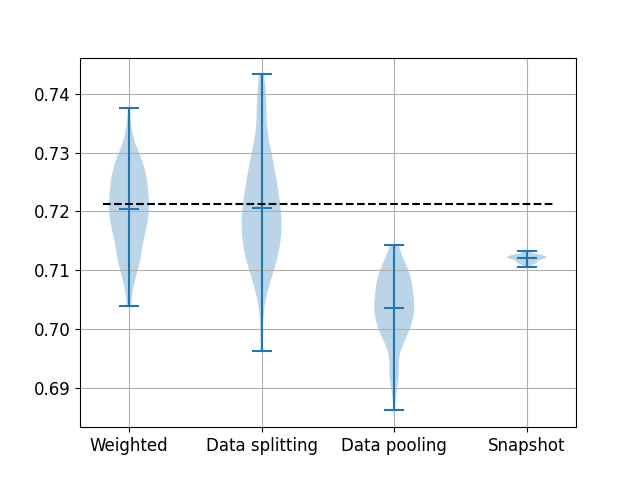}}
	\subfigure[control: SD]{
		\includegraphics[width=2.1in]{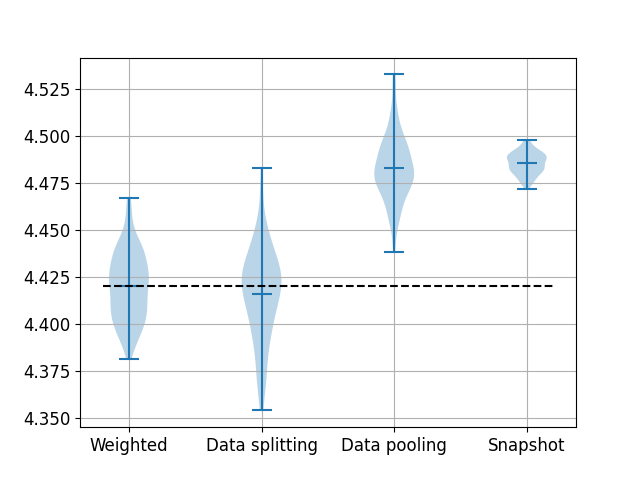}}
	\subfigure[control: FR]{
		\includegraphics[width=2.1in]{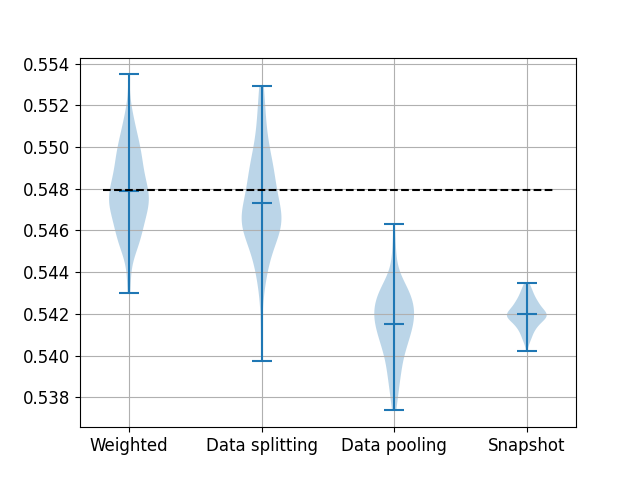}}
	\caption{A/B testing results for $\protect\alpha_C=10$, $\protect\alpha_T=8$%
		, and $p = 1/2$ }
	\label{fig:ab_p=1/2_10VS8}
\end{figure}
\begin{table}[!htb]
	\caption{Bias, standard deviation, and standard error estimated from the
		experiment for the metrics in the case that $\protect\alpha_C=10$, $\protect%
		\alpha_T=8$, and $p = 1/2$}
	\label{tab:bias_AB_1/2_10_8}\centering
	\begin{tabular}{lccccccccc}
		\toprule & \multicolumn{3}{c}{Proportion of short videos} & 
		\multicolumn{3}{c}{Stay durations} & \multicolumn{3}{c}{Finishing rates}  \\ 
		\midrule & Bias & STD & SE & Bias & STD & SE & Bias & STD & SE \\ 
		
		Weighted & 0.005 & 0.007 & 0.001 & -0.010 & 0.021 & 0.009 & 0.001 & 0.002 & 0.001 \\
		Data splitting & -0.008 & 0.015 & 0.001 & 0.018 & 0.040 & 0.009 & -0.003 & 0.004 & 0.001 \\
		Data pooling & 0.039 & 0.002 & 0.001 & -0.124 & 0.009 & 0.009 & 0.014 & 0.001 & 0.001 \\
		Snapshot & 0.031 & 0.001 & 0.001 & -0.107 & 0.009 & 0.009 & 0.013 & 0.001 & 0.001 \\
		\bottomrule
	\end{tabular}%
\end{table}
\begin{table}[!ht]
	\caption{Experimentation values in the case that $\protect\alpha_C=10$, $%
		\protect\alpha_T=8$, and $p = 0.5$}
	\label{tab:cost_AB_0.5_10_8}\centering
	\begin{tabular}{lcc}
		\toprule & Treatment values & Control values \\ 
		\midrule 
		Global & 9.8144 $\pm$ 0.0007 & 8.8040 $\pm$ 0.0007 \\
		Weighted & 9.8132 $\pm$ 0.0008 & 8.8032 $\pm$ 0.0008 \\
		Data splitting &  9.7953 $\pm$ 0.0009 & 8.7946 $\pm$ 0.0009 \\
		Data  pooling & 9.8312 $\pm$ 0.0007 & 8.8151 $\pm$ 0.0007 \\
		Snapshot & 9.8383 $\pm$ 0.0009 & 8.8215 $\pm$ 0.0008 \\
		\bottomrule
	\end{tabular}%
\end{table}
\begin{table}[!ht]
	\caption{Bias, standard deviation, and standard error estimated from the
		experiment for the metrics in the case that $\protect\alpha_C=10$, $\protect%
		\alpha_T=9$, and $p = 0.3$}
	\label{tab:bias_AB_0.3}\centering
	\begin{tabular}{lccccccccc}
		\toprule & \multicolumn{3}{c}{Proportion of short videos} & 
		\multicolumn{3}{c}{Stay durations} & \multicolumn{3}{c}{Finishing rates}  \\ 
		\midrule & Bias & STD & SE & Bias & STD & SE & Bias & STD & SE \\ 
		{Weighted} & 0.002 & 0.004 & 0.001 & -0.004 & 0.013 & 0.009 & 0.000 & 0.002 & 0.001 \\
		{Data splitting} & -0.010 & 0.016 & 0.001 & 0.009 & 0.042 & 0.009 & -0.003 & 0.005 & 0.001 \\
		{Data pooling} & 0.017 & 0.002 & 0.001 & -0.056 & 0.009 & 0.009 & 0.006 & 0.001 & 0.001 \\
		{Snapshot} & 0.010 & 0.001 & 0.001 & -0.043 & 0.009 & 0.009 & 0.005 & 0.001 & 0.001 \\
		\bottomrule
	\end{tabular}%
\end{table}
\begin{table}[!ht]
	\caption{Experimentation values in the case that $\protect\alpha_C=10$, $%
		\protect\alpha_T=9$, and $p = 0.3$}
	\label{tab:cost_AB_0.3_10_9}\centering
	\begin{tabular}{lcc}
		\toprule & Treatment values & Control values \\ 
		\midrule 
		Global & 9.8824 $\pm$ 0.0006 & 9.3521 $\pm$ 0.0007 \\
		{Weighted} & 9.8820 $\pm$ 0.0011 & 9.3521 $\pm$ 0.0007 \\
		{Data splitting} & 9.8537 $\pm$ 0.0010 & 9.3476 $\pm$ 0.0007 \\
		{Data pooling} & 9.8860 $\pm$ 0.0010 & 9.3531 $\pm$ 0.0006 \\
		{Snapshot} & 9.8927 $\pm$ 0.0011 & 9.3628 $\pm$ 0.0007 \\
		
		\bottomrule
	\end{tabular}%
\end{table}

\begin{figure}[!ht]

	\centering 
	\subfigure[treatment effects: proportion]{
		\includegraphics[width=2.1in]{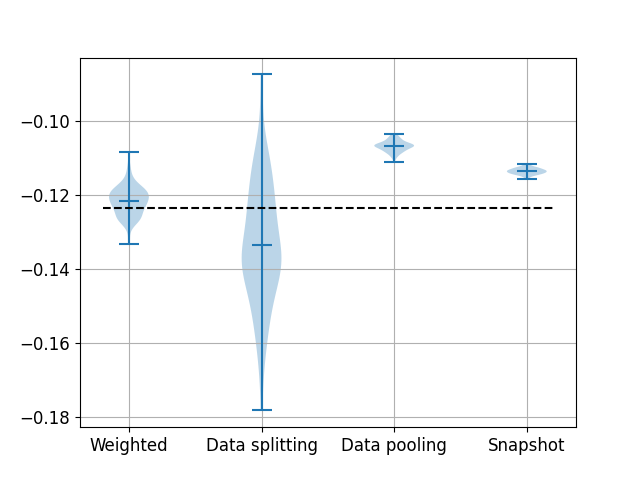}}
	\subfigure[treatment effects: SD]{
		\includegraphics[width=2.1in]{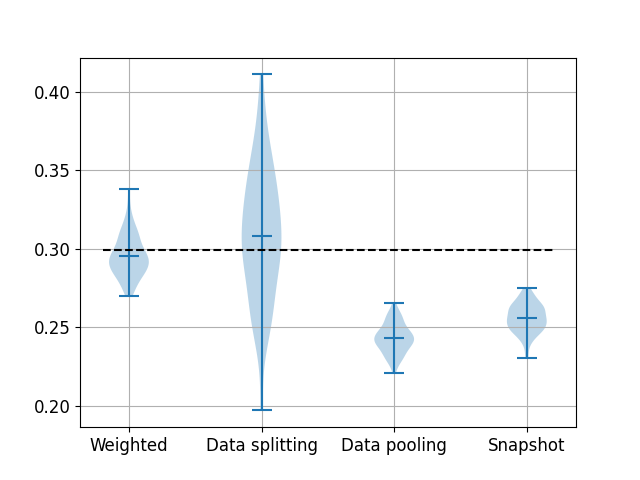}}
	\subfigure[treatment effects: FR]{
		\includegraphics[width=2.1in]{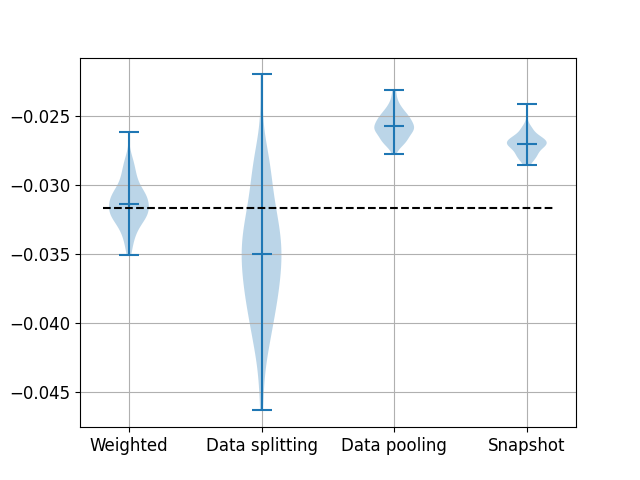}}
	\subfigure[treatment: proportion]{ 
		\includegraphics[width=2.1in]{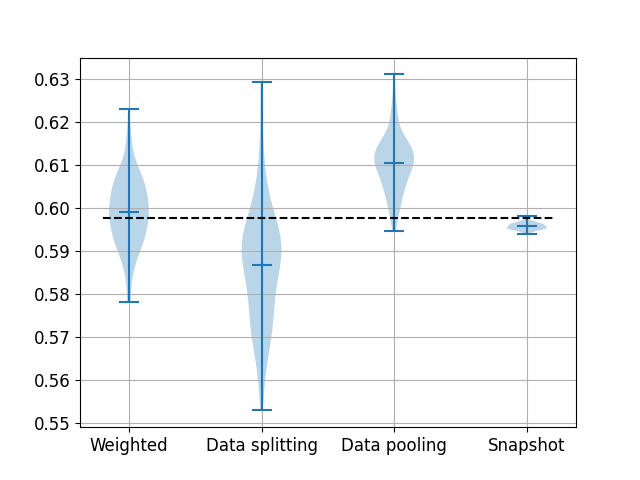}}
	\subfigure[treatment: SD
	]{ \includegraphics[width=2.1in]{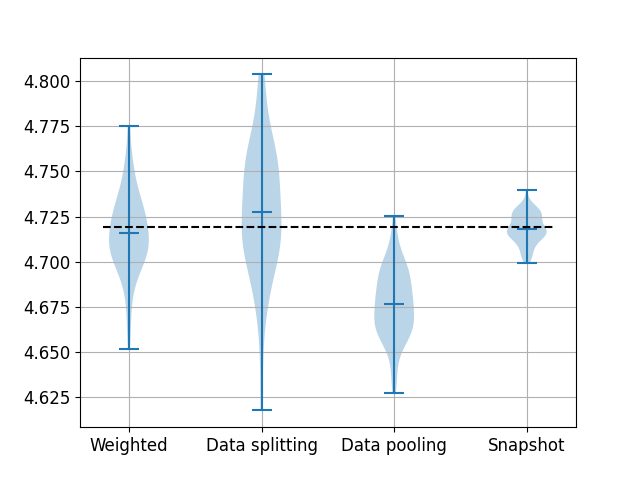}}
	\subfigure[treatment: FR]{
		\includegraphics[width=2.1in]{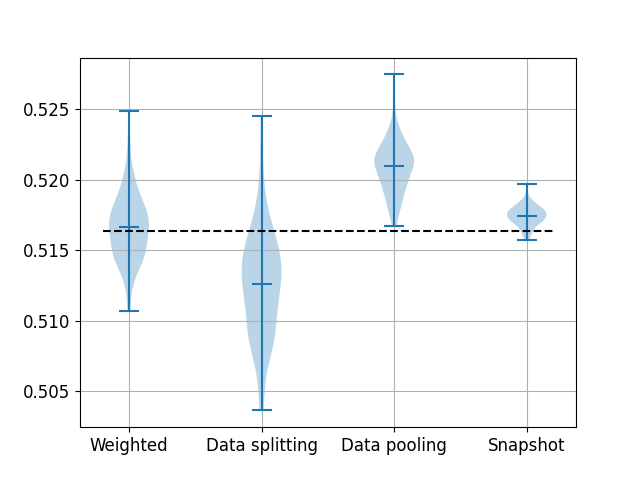}}
	\subfigure[control: proportion]{
		\includegraphics[width=2.1in]{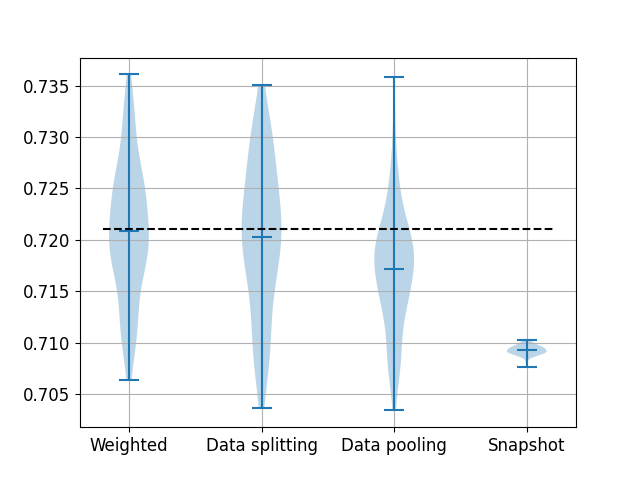}}
	\subfigure[control: SD]{
		\includegraphics[width=2.1in]{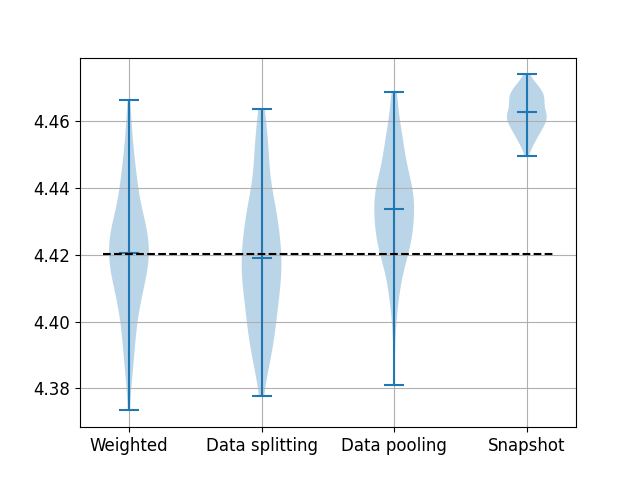}}
	\subfigure[control: FR]{
		\includegraphics[width=2.1in]{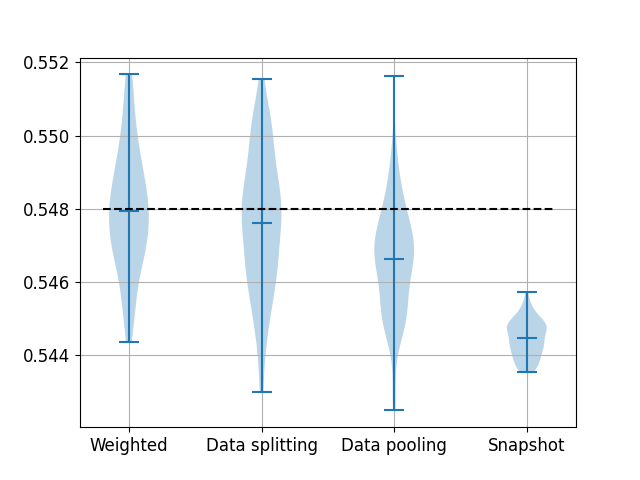}}
	\caption{A/B testing results for $\alpha_C=10$, $\alpha_T=9$%
		, and $p =0.3$ }
		\label{fig:ab_p=0.3}
	\end{figure}
\newpage 
\subsection{A/A Tests}
In this subsection, we present additional A/A testing results with $\alpha_C=\alpha_T=10$ and $p=0.2$. The estimations of treatment effects are visualized in Figure \ref{fig:aa_p=0.2}, and Table \ref{tab:AAtestp=0.2} offers comprehensive details regarding the estimators and type I errors.
\begin{table}[!htb]
	\caption{The average estimations and type I error for the A/A test with $\alpha_C=\alpha_T=10$ and $p=0.2$}
	\label{tab:AAtestp=0.2}\centering
	\begin{tabular}{lcccccc}
		\toprule & \multicolumn{2}{c}{Proportion of short videos} & 
		\multicolumn{2}{c}{Stay durations} & \multicolumn{2}{c}{Finishing rates} \\ 
		\midrule & Estimation & Type I error & Estimation & Type I error & Estimation
		& Type I error \\ 
		Weighted & -0.0004 & 0.47  & -0.0005 & 0.07  & -0.0002 & 0.08 \\
		Data splitting & -0.0073 & 0.91  & -0.0036 & 0.56  & -0.0031 & 0.75 \\
		Data pooling & 0.0001 & 0.04  & 0.0008 & 0.03  & 0.0000 & 0.05 \\
		Snapshot & -0.0001 & 0.07  & -0.0001 & 0.06  & 0.0001 & 0.04 \\
		
		\bottomrule
	\end{tabular}%
\end{table}
\begin{figure}[!ht]
	\centering 
	\subfigure[treatment effects: proportion]{
		\includegraphics[width=2.1in]{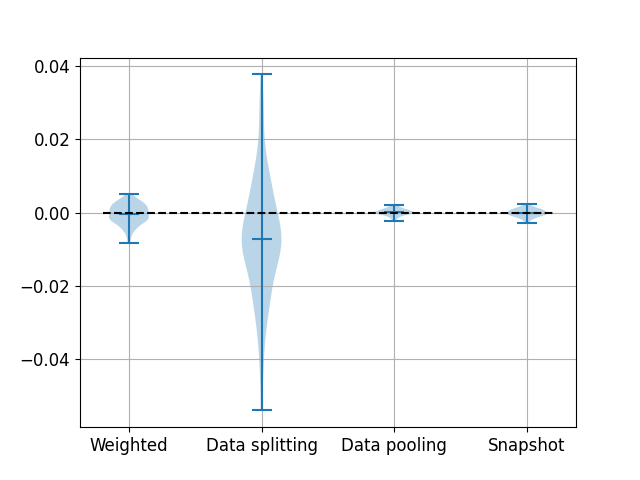}}
	\subfigure[treatment effects: SD]{
		\includegraphics[width=2.1in]{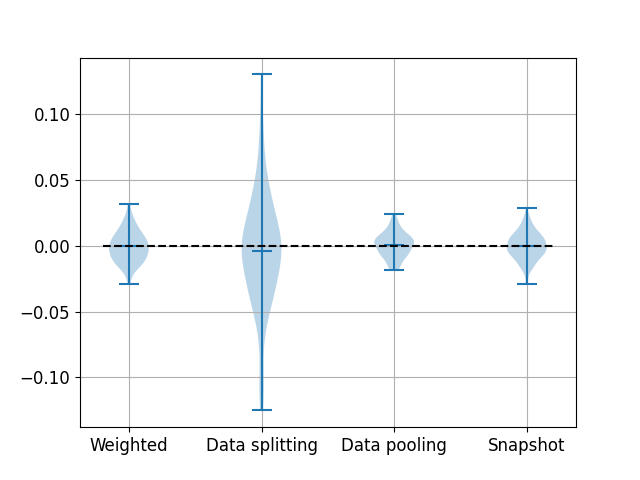}}
	\subfigure[treatment effects: FR]{
		\includegraphics[width=2.1in]{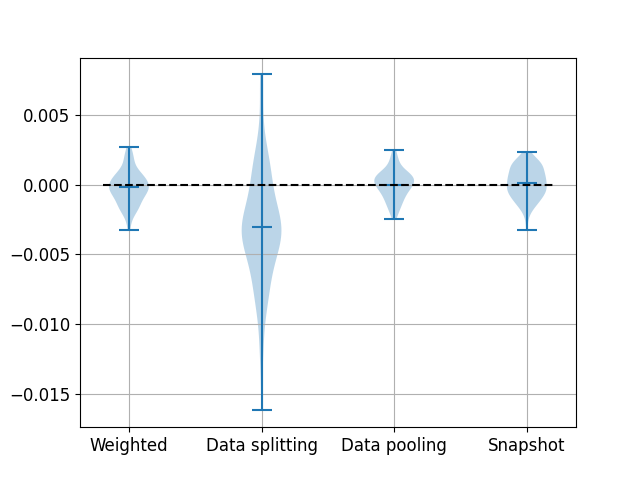}}
	\caption{A/A testing results for $\protect\alpha_C=\alpha_T=10$
		and $p =0.2$}
	\label{fig:aa_p=0.2}
\end{figure}

\end{appendices} 

\end{document}